\newtheorem{theorem}{Theorem}
\theoremstyle{plain}
\newtheorem{definition}{Definition}
\newtheorem{lemma}{Lemma}
\newtheorem{proposition}{Proposition}
\numberwithin{equation}{section}
\newcommand{\reals}{\mathbb{R}}
\newcommand{\complexes}{\mathbb{C}}
\newcommand{\integers}{\mathbb{Z}}
\newcommand{\spec}[1]{\mathrm{spec} (#1)}
\newcommand{\iprod}[2]{\langle #1, #2 \rangle}
\newcommand{\tnorm}[1]{\| #1 \|_{1}}
\newcommand{\fprod}[2]{\langle #1, #2 \rangle_{\text{F}}}
\newcommand{\tr}[1]{\mathrm{tr} \, #1 }
\newcommand{\comm}[2]{[#1,#2]}
\newcommand{\acomm}[2]{\{#1,#2\}}
\newcommand{\ad}[1]{\comm{#1}{\,\cdot \,}}
\newcommand{\hadj}{*}
\newcommand{\dual}{\prime}
\newcommand{\id}{\mathrm{id}}
\newcommand{\cp}[1]{\mathrm{CP}( #1 )}
\newcommand{\cptp}[1]{\mathrm{CP}_{\text{t.p.}}( #1 )}
\newcommand{\cpu}[1]{\mathrm{CP}_{\text{u.}}( #1 )}
\newcommand{\matr}[1]{\mathbb{M}_{#1}}
\newcommand{\matrd}{\matr{d}}
\renewcommand{\Re}[1]{\mathrm{Re}\,#1}
\renewcommand{\Im}[1]{\mathrm{Im}\,#1}
\renewcommand{\vec}[1]{\boldsymbol{#1}}
\begin{document}
\title[On the Floquet analysis...]{On the Floquet analysis of commutative periodic Lindbladians in finite dimension}
\author{Krzysztof Szczygielski}
\address
{Institute of Theoretical Physics and Astrophysics, Faculty of Mathematics, Physics and Informatics, University of Gda\'{n}sk, 80-308 Gda\'{n}sk, Poland}%
\email{krzysztof.szczygielski@ug.edu.pl}
\date{\today}

\begin{abstract}
We consider the Markovian Master Equation over matrix algebra $\matrd$, governed by periodic Lindbladian $L_t$ in standard (Kossakowski-Lindblad-Gorini-Sudarshan) form. It is shown that under simplifying assumption of commutativity, i.e.~if $L_t L_{t'} = L_{t'}L_t$ for any moments of time $t,t'\in\reals_+$, the Floquet normal form of resulting completely positive dynamical map is not guaranteed to be given by simultaneously globally Markovian maps. In fact, the periodic part of the solution is even shown to be necessarily non-Markovian. Two examples in algebra $\matr{2}$ are explicitly calculated: a periodically modulated random qubit dynamics, being a generalization of pure decoherence scheme, and a classically perturbed two-level system, coupled to reservoir via standard ladder operators.
\end{abstract}

\maketitle

\section{Introduction}

Periodically controlled open quantum systems recently began gaining increasing attention, mainly for their applicability in quantum information processing, error correction, quantum thermodynamics and general description of dephasing processes in presence of external quasi-classical perturbations. The general microscopic construction of Markovian Master Equation (MME) describing an open quantum system with periodic Hamiltonian, weakly interacting with reservoir of infinite degrees of freedom, was established in \cite{Alicki2006b} and later extended in \cite{Szczygielski2014}. The MME was obtained with application of celebrated \emph{Floquet theory} in the usual regime of \emph{weak coupling limit}. This approach proved itself to be of particular importance for laser spectroscopy and quantum thermodynamics \cite{Szczygielski2013,Alicki2017a,Alicki2018}, ultimately leading to major advancements in description of quantum heat engines, solar cells and related ideas \cite{Alicki2015,Alicki_2015,Alicki_2017,Alicki_2018,Alicki2019}.

In this paper, we elaborate on general properties of the solution of MME on algebra $\matrd$ of complex matrices of size $d$, 
\begin{equation}\label{eq:IntroductionMME}
	\frac{d\rho_t}{dt} = L_t (\rho_t),
\end{equation}
where $\rho_t$ (for $t \in \reals_+ = [0,\infty )$) is a time-dependent \emph{density matrix}, i.e.~a Hermitian, positive semi-definite matrix of trace one, and $L_t$ is the time-periodic and \emph{Lindbladian} in celebrated \emph{standard} (Kossakowski-Lindblad-Gorini-Sudarshan) \emph{form} \cite{Gorini1976,Lindblad1976,Breuer2002,Alicki2006a,Rivas2012},
\begin{equation}\label{eq:Lindbladian}
	L_t (\rho) = -i \comm{H_t}{\rho} + \sum_{j} \left( V_{j,t} \rho V_{j,t}^{\hadj} - \frac{1}{2}\acomm{V_{j,t}^{\hadj} V_{j,t}}{\rho} \right),
\end{equation}
with all matrices $H_t, V_{j,t} \in \matrd$ \emph{periodic} with some period $T > 0$ and $H_t$ being Hermitian (we put $\hbar = 1$ for convenience); $\acomm{a}{b} = ab+ba$ stands for the anticommutator. Lindbladian \eqref{eq:Lindbladian} generates evolution $\rho_t = \Lambda_t (\rho_0)$, where $\{\Lambda_t : t \in \reals_+ \}$ is a one-parameter family of \emph{quantum dynamical maps}, each of them \emph{completely positive}, \emph{trace preserving} and a \emph{trace norm contraction}. Structure of $L_t$ guarantees that $\Lambda_t$ satisfies a much stronger condition of being \emph{CP-divisible}, or \emph{Markovian} \cite{Rivas2012,Chruscinski2014a}:

\begin{definition}\label{def:CPdiv}
Completely positive trace preserving linear map $\Lambda_t$, $t\in\reals_+$ on $\matrd$ will be called \emph{CP-divisible} or \emph{Markovian} over interval $\mathcal{I} \subseteq \reals_+$, if and only if the associated two-parameter family of \emph{propagators}
\begin{equation}
	V_{t,s} = \Lambda_t \Lambda_{s}^{-1}
\end{equation}
is also completely positive and trace preserving for all $t,s \in \mathcal{I}$, $s\leqslant t$.
\end{definition}

In this work, we will be focusing mainly on CP-divisibility (Markovianity) of certain evolution maps understood as in definition \ref{def:CPdiv}. The notation will be fairly standard. From here onwards, we endow $\matrd$ with Frobenius (Hilbert-Schmidt) inner product
\begin{equation}
	\fprod{a}{b} = \tr{a^\hadj b}.
\end{equation}
$\matrd$ is spanned by orthonormal \emph{Frobenius basis} $\{F_i : i = 1, \, ... \,, \, d^2\}$, where we conveniently choose $F_{d^2} = \frac{1}{\sqrt{d}}I$, $I$ standing for identity matrix, with all remaining $F_i$ traceless \cite{Bertlmann2008,Hall2015}. By $\| a \|_1$ and $\| a \|$, $a\in\matrd$, we will respectively denote the \emph{trace norm} and induced \emph{operator norm} of matrix $a$ and its Hermitian conjugate will be $a^\hadj$. For linear space $\mathcal{X}$, $B(\mathcal{X})$ will be the algebra of bounded linear maps over $\mathcal{X}$ (complete with respect to supremum norm). For $A\in B(\mathcal{X})$, symbols $E_{A}(\lambda)$ will denote the eigenspace of $A$ corresponding to eigenvalue $\lambda$ and geometric \emph{multiplicity} of $\lambda$ will be $k_\lambda = \dim{E_{A}(\lambda)} $. To shorten the notation, we will simply write $A \in \cptp{\mathcal{X}}$ (resp. $A \in \cpu{\mathscr{A}}$) if $A$ is \emph{completely positive and trace preserving} over ordered space $\mathcal{X}$ (resp.~\emph{completely positive unital} over unital algebra $\mathscr{A}$). We will write $a \geqslant 0$, or $a \in \mathscr{A}^+$, to indicate that $a$ lays in positive cone in $\mathscr{A}$ (i.e.~is positive semi-definite).

\section{Floquet approach to CP-divisible dynamics}
\label{sec:PeriodicLindbladians}

Since $\matrd$ and $\complexes^{d^2}$ are naturally isomorphic as linear spaces, Master Equation \eqref{eq:IntroductionMME} may be always \emph{vectorized} \cite{Miszczak2011,Am-Shallem2015,IngemarBengtsson2017}, i.e.~represented as ordinary differential equation for vector-valued function $t\mapsto \rho_t \in \complexes^{d^2}$; in such case, $L_t$ is a periodic matrix in $\matr{d^2}$  and hence, the MME is expressible as ordinary differential equation (ODE) with periodic matrix coefficient. Dynamical map $\Lambda_t$ generated by \eqref{eq:IntroductionMME} also admits a bijective representation in $\matr{d^2}$ (often called the \emph{superoperator} in this context) and satisfies a matrix counterpart of MME,
\begin{equation}\label{eq:OperatorMME}
	\frac{d}{dt}\Lambda_t = L_t \Lambda_t , \quad \Lambda_0 = \id,
\end{equation}
being therefore the \emph{principal fundamental solution} of \eqref{eq:IntroductionMME}. Throughout the article, we impose a technical, however also physically justified, additional restriction on regularity of $L_t$: namely, we allow it to change \emph{piecewise-continuously}, but without intermediate jumps at discontinuities, i.e. if some $t_0 \in \reals_+$ is a point of discontinuity of $L_t$, then function $t\mapsto L_t$ will be required to be either left- or right-continuous (in $B(\matrd)$) at $t_0$. By general characterization of ODEs with periodic coefficients provided by celebrated \emph{Floquet's theorem} \cite{Chicone2006}, $\Lambda_t$ admits a product structure
\begin{equation}\label{eq:FloquetFormGeneral}
	\Lambda_t = P_t e^{tX},
\end{equation}
such that function $t\mapsto P_t \in B(\matrd)$ is periodic and absolutely continuous, $P_0 = \mathrm{id}$, and $X\in B(\matrd)$ is constant. Both maps of the pair $(P_t, e^{tX})$ are invertible and the (non-unique) pair itself is called \emph{the Floquet normal form} of solution $\Lambda_t$.

In this section we present some results concerning conditions for CP-divisibility of Floquet normal form, partially in general case (in section \ref{sec:GeneralConsiderations}) and especially, in case of \emph{commutative} Lindbladian families (in section \ref{sec:CommLindbladian}). Commutativity is a severe simplification, however still of practical applicability for various quantum models and of conceptual and mathematical importance, as it provides an exactly solvable case. In particular, we show that one \emph{may not} expect simultaneous CP-divisibility of Floquet pair, even despite their composition is perfectly Markovian quantum dynamics. Some general remarks regarding asymptotic properties of solutions are also addressed (in section \ref{sec:Stability}). Two exemplary applications of such commutative periodic Lindbladian families are then presented in section \ref{sec:Motivation}. 

\subsection{General considerations}
\label{sec:GeneralConsiderations}
Finding the explicit form of a pair $(P_t, e^{tX})$ may be a very challenging task, as it clearly requires one to find an actual solution of the ODE first. In fact, no universal methods of obtaining the solution exist apart from some perturbative approaches, including Dyson, Magnus or Fer expansions \cite{Dyson1949,Blanes1998,Butcher2009}; these however are rarely exactly summable. Some properties may be sometimes deduced from the \emph{stroboscopic} form of the fundamental matrix solution: given a solution $\Lambda_t = P_t e^{tX}$, the stroboscopic dynamics is $\Lambda_{nT} = e^{nTX}$, which is easily implied by periodicity of $P_t$; clearly, $\Lambda_{nT}\in\cptp{\matrd}$. Putting $n=1$, we obtain the so-called \emph{monodromy matrix} $\Lambda_T = e^{TX}$ which allows to find
\begin{equation}\label{eq:XgeneralFormula}
	X = \frac{1}{T} \log{\Lambda_T},
\end{equation}
where existence of the logarithm is assured by invertibility of $\Lambda_T$. The problem arises with complete positivity of a semigroup $\{e^{tX} : t \in \reals_+\}$ as \emph{a priori} there is no guarantee that $\Lambda_T$ lays in the range of any Markovian semigroup, i.e.~any branch of $\log{\Lambda_T}$ in Lindblad form exists. Problem of accessibility of set $\cptp{\matrd}$ (i.e.~quantum channels) by Lindblad semigroups is surprisingly non-trivial even in low-dimensional matrix algebras and is subject to active research \cite{Ruskai2002,Puchala2019,Wolf2008,Schnell2018,Denisov2018}; we will not however address it here directly.

We call a map $\Phi$ on C*-algebra $\mathscr{A}$ a \emph{*-map}, if and only if it is \emph{Hermiticity preserving}, i.e.~satisfies $\Phi(x)^\hadj = \Phi(x^\hadj)$ for all $x\in\mathscr{A}$. The following simple claim holds:

\begin{proposition}\label{prop:SimulHPTP}
Let the Floquet normal form $(P_t, e^{tX})$ satisfy MME \eqref{eq:OperatorMME} for periodic Lindbladian \eqref{eq:Lindbladian}. If one of the maps of the pair is trace preserving and a *-map over $\matrd$, so is the second one.
\end{proposition}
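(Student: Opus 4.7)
The plan is to reduce the statement to two elementary closure properties of trace-preserving (TP) and Hermiticity-preserving ($\ast$-) maps on $\matrd$, combined with the fact that the full dynamical map $\Lambda_t$ automatically belongs to both classes. The standard form \eqref{eq:Lindbladian} guarantees $\tr L_t(x) = 0$ and $L_t(x^\hadj) = L_t(x)^\hadj$ for every $x \in \matrd$, and these properties are inherited by the principal solution $\Lambda_t$ of \eqref{eq:OperatorMME}, as already noted in the text following \eqref{eq:Lindbladian}. Invertibility of $\Lambda_t$, $P_t$ and $e^{tX}$ in $B(\matrd)$ is part of Floquet's theorem and will be used freely.

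First, I would record two essentially one-line facts:
\begin{enumerate*}[label=(\roman*)]
\item the classes of TP maps and of $\ast$-maps on $\matrd$ are each closed under composition;
\item if $A \in B(\matrd)$ is invertible and is a TP map (resp.\ a $\ast$-map), then $A^{-1}$ is too.
\end{enumerate*}
Item (i) is immediate from the definitions. For (ii), writing $y = A(x)$ gives $\tr A^{-1}(y) = \tr x = \tr y$ on the TP side, and $A^{-1}(y^\hadj) = A^{-1}(A(x^\hadj)) = x^\hadj = A^{-1}(y)^\hadj$ on the $\ast$-map side.

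With these in hand, the proposition follows by solving $\Lambda_t = P_t \, e^{tX}$ for whichever factor is not assumed TP and $\ast$. If $P_t$ is known to be TP and a $\ast$-map, then so is $P_t^{-1}$ by (ii), and then $e^{tX} = P_t^{-1} \Lambda_t$ inherits both properties from $P_t^{-1}$ and $\Lambda_t$ via (i); the converse direction, using $P_t = \Lambda_t \, e^{-tX}$, is entirely symmetric. There is no substantive obstacle in the argument: the content of the proposition seems to be to emphasize, by contrast, that this easy descent of the TP and $\ast$ properties to the Floquet factors is \emph{not} mirrored by the corresponding descent of complete positivity, which is the delicate question that motivates the rest of the paper.
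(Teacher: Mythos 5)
Your argument is correct and matches the paper's intended proof: the paper itself only remarks that ``the proof is basic and relies on invertibility of both maps in Floquet pair,'' which is precisely the reduction you carry out via $e^{tX}=P_t^{-1}\Lambda_t$ and $P_t=\Lambda_t e^{-tX}$, together with closure of the trace-preserving and $*$-map classes under composition and inversion. Nothing further is needed.
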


The proof is basic and relies on invertibility of both maps in Floquet pair. Let us now introduce few additional notions. By expanding matrices $V_{j,t}$ in expression \eqref{eq:Lindbladian} in Frobenius basis, one obtains Lindbladian in so-called \emph{first standard form} \cite{Gorini1976,Breuer2002,Alicki2006a,Rivas2012},
\begin{equation}\label{eq:Lindbladian2}
	L_t (\rho) = -i \comm{H_t}{\rho} + \sum_{j,k=1}^{d^2-1} a_{jk}(t) \left( F_j \rho F_{k}^{\hadj} - \frac{1}{2}\acomm{F_{k}^{\hadj} F_j}{\rho}\right),
\end{equation}
where the \emph{Kossakowski matrix} $\mathbf{a}_t = [a_{jk}(t)]\in \matr{d^2 - 1}$ is positive semi-definite and both matrix-valued functions $t\mapsto H_t$, $t\mapsto \mathbf{a}_t$ are piecewise-continuous, as stated earlier. Then, $L_t$ generates a CP-divisible, trace preserving dynamical map iff it is of a form \eqref{eq:Lindbladian}, which is true iff it is of a form \eqref{eq:Lindbladian2} for Hermitian $H_t$ and $\mathbf{a}_t \geqslant 0$. For later use, we also introduce
\begin{equation}\label{eq:DjkDefinition}
	D_{jk}(x) = F_j x F_{k}^{\hadj} - \frac{1}{2}\acomm{F_{k}^{\hadj} F_j}{x} ,
\end{equation}
to shorten the notation a little bit.

Let us assume that periodic part of Floquet normal form $P_t$ is a trace preserving *-map. Applying lemma \ref{lemma:EveryLinMap} (available in \ref{sec:SecondaryResults}), it may be then cast in the form
\begin{equation}\label{eq:PtDecomposition}
	P_t (x) = \sum_{j,k=1}^{d^2} p_{jk}(t) F_j x F_{k}^{\hadj}, \quad x\in\matrd ,
\end{equation}
for Hermitian, periodic matrix $[p_{jk}(t)]\in\matr{d^2}$. This allows us to formulate a following result, which can be considered as a partial answer to the question of simultaneous complete positivity of Floquet pair in general case:

\begin{proposition}\label{prop:PtCPimpliesEtXequiv}
Let $(P_t, e^{tX})$ be the Floquet normal form for $L_t$ \eqref{eq:Lindbladian2} s.t. $P_t$ is a trace preserving *-map over $\matrd$, admitting a form \eqref{eq:PtDecomposition}, and let $\mathbf{\tilde{P}}_t = [p_{jk}(t)]_{j,k=1}^{d^2-1}$. Then $\{e^{tX} : t\in\reals_+\}$, is a Markovian contraction semigroup if and only if
\begin{equation}\label{eq:XstandardCondition}
	\mathbf{a}_0 - \left.\frac{d\mathbf{\tilde{P}}_{t}}{dt}\right|_{0} \in \matr{d^2-1}^{+}.
\end{equation}
\end{proposition}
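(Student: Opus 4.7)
The plan is to derive an explicit first-standard-form representation of $X$ in terms of $\mathbf{a}_0$ and $\tilde{\mathbf{P}}'_0$, then invoke the usual characterisation of Lindblad generators by positive semi-definiteness of the Kossakowski matrix.

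First I would differentiate the Floquet decomposition $\Lambda_t = P_t e^{tX}$ and plug into MME \eqref{eq:OperatorMME}, getting $\dot{P}_t + P_t X = L_t P_t$. Evaluating at $t=0$, where $P_0 = \id$, yields the key identity
\begin{equation}
    X \,=\, L_0 - \dot{P}_0.
\end{equation}
Thus everything reduces to recasting $\dot{P}_0$ in first standard form. Because $P_t$ is a trace-preserving \emph{*}-map for every $t$, its derivative $\dot{P}_0$ is automatically a \emph{*}-map which is \emph{trace-annihilating}: $\tr \dot{P}_0(x) = 0$ for every $x \in \matrd$.

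Next, using the decomposition \eqref{eq:PtDecomposition}, I would write
\begin{equation}
    \dot{P}_0 (x) \,=\, \sum_{j,k=1}^{d^2} \dot{p}_{jk}(0)\, F_j x F_k^{\hadj},
\end{equation}
and split the sum according to whether $j$ or $k$ equals $d^2$. Substituting $F_{d^2} = I/\sqrt{d}$, terms with exactly one index equal to $d^2$ produce $Cx + x C^{\hadj}$ with $C = \frac{1}{\sqrt{d}}\sum_{j<d^2}\dot{p}_{j,d^2}(0) F_j$; splitting $C$ into its Hermitian and anti-Hermitian parts gives an anticommutator $\{A,x\}$ and a commutator $i[B,x]$. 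The remaining diagonal $F_{d^2} x F_{d^2}^{\hadj}$ contributes $\frac{\dot{p}_{d^2,d^2}(0)}{d}x$, while the block with $j,k < d^2$ is rewritten using $F_j x F_k^{\hadj} = D_{jk}(x) + \frac{1}{2}\{F_k^{\hadj}F_j,x\}$ from \eqref{eq:DjkDefinition}. All anticommutator contributions then collect into a single $\{A+M,x\}$ where $M$ is a Hermitian matrix depending only on $\tilde{\mathbf{P}}'_0$. Imposing trace-annihilation forces $A + M = -\frac{\dot{p}_{d^2,d^2}(0)}{2d}\,I$, and this precisely cancels the residual $\frac{\dot{p}_{d^2,d^2}(0)}{d}x$ term. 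What remains is
\begin{equation}
    \dot{P}_0 (x) \,=\, i[B,x] + \sum_{j,k=1}^{d^2-1} \dot{p}_{jk}(0) \, D_{jk}(x),
\end{equation}
i.e.\ a Lindbladian-type expression whose Kossakowski matrix is exactly $\tilde{\mathbf{P}}'_0$.

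Substituting into $X = L_0 - \dot{P}_0$ and using the form \eqref{eq:Lindbladian2} of $L_0$ collects all commutator terms into $-i[H_0 + B,\,\cdot\,]$, which is Hamiltonian because $B$ is Hermitian, and leaves a dissipator with Kossakowski matrix $\mathbf{a}_0 - \tilde{\mathbf{P}}'_0$. By the Kossakowski--Gorini--Sudarshan--Lindblad characterisation, $\{e^{tX}\}$ is a CPTP (hence Markovian, trace-norm contraction) semigroup iff this matrix is positive semi-definite, which is precisely condition \eqref{eq:XstandardCondition}. The main obstacle will be the bookkeeping in the second step: carefully tracking how the $F_{d^2}$-indexed coefficients reorganise into anticommutator, commutator and identity contributions, and verifying that the trace-annihilation constraint causes the Hermitian anticommutator piece and the residual multiple of the identity to cancel exactly, so that only the $(d^2-1)\times(d^2-1)$ block of $[\dot{p}_{jk}(0)]$ survives in the dissipative part.
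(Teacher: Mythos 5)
Your proposal is correct and follows essentially the same route as the paper: the key identity $X = L_0 - \dot{P}_0$ obtained by differentiating $\Lambda_t = P_t e^{tX}$ at $t=0$, followed by recasting $\dot{P}_0$ in first standard form so that the Kossakowski matrix of $X$ is $\mathbf{a}_0 - \dot{\tilde{\mathbf{P}}}_0$. The only cosmetic difference is that the paper performs the commutator/anticommutator bookkeeping on $P_t$ itself via Lemma~\ref{lemma:EveryLinMap} and then differentiates, whereas you reorganize $\dot{P}_0$ directly and use trace-annihilation of the derivative; both yield the same intermediate expression $\dot{P}_0(x) = i\comm{B}{x} + \sum_{j,k<d^2}\dot{p}_{jk}(0)D_{jk}(x)$.
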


\begin{proof}
Showing the claim involves simple algebra, therefore we only sketch the proof. As $\Lambda_t$ satisfies the operator MME \eqref{eq:OperatorMME}, after differentiating $\Lambda_t = P_t e^{tX}$ one easily obtains $\frac{dP_t}{dt} = L_t P_t - P_t X$ which, after putting $t=0$ and reordering, yields
\begin{equation}\label{eq:XformulaL0P0}
	X = L_0 - \left.\frac{dP_t}{dt}\right|_{0}.
\end{equation}
Therefore, $\{e^{tX} : t \in \reals_+\}$ is CP-divisible if and only if \eqref{eq:XformulaL0P0} is of standard form. According to lemma \ref{lemma:EveryLinMap}, $P_t$ can be given as
\begin{equation}\label{eq:Pt2}
	P_t (x) = x + i \comm{G_t}{x} - \acomm{K_t}{x} + \sum_{j,k=1}^{d^2 - 1} p_{jk}(t) F_j x F_{k}^{\hadj},
\end{equation}
where $G_t$ and $K_t$ are Hermitian matrices,
\begin{equation}\label{eq:Gt}
	G_t = \frac{1}{2i\sqrt{d}} \sum_{j=1}^{d^2-1} \Big( p_{jd^2} (t)F_j - p_{d^2 j}(t) F_{j}^{\hadj} \Big), \qquad K_t = \frac{1}{2}\sum_{j,k=1}^{d^2-1} p_{jk}(t) F_{k}^{\hadj} F_j .
\end{equation}
Differentiating \eqref{eq:Pt2} and substituting to \eqref{eq:XformulaL0P0} leads, after some algebra, to
\begin{equation}\label{eq:Xexplicit}
	X(x) = -i\comm{H_0 - \left.\frac{dG(t)}{dt}\right|_{0}}{x} + \sum_{j,k=1}^{d^2-1} \left( a_{jk}(0) - \left.\frac{dp_{jk}(t)}{dt}\right|_{0} \right) D_{jk}(x),
\end{equation}
for $D_{jk}$ given via \eqref{eq:DjkDefinition}. Matrix $\mathbf{\tilde{P}}_t$ is clearly Hermitian, so $H_0 - \left.\frac{dG(t)}{dt}\right|_{0}$ is also; hence, \eqref{eq:Xexplicit} defines a generator of completely positive contraction semigroup, i.e.~is Markovian, if and only if a matrix $\left[ a_{jk}(0) - \left.\frac{dp_{jk}(t)}{dt}\right|_{0} \right]_{jk} = \mathbf{a}_0 - \left. \frac{d\mathbf{\tilde{P}}_t}{dt}\right|_{0}$ is positive semi-definite.
\end{proof}

\subsection{Stability and asymptotic behavior of solutions}
\label{sec:Stability}

Stability of solutions remains a significant matter of classical theory of ODEs. Naturally, it is equally important in context of Floquet analysis as we are very often interested in qualitative \emph{asymptotic} behavior of solutions to certain initial value problems, i.e.~after very long evolution time. In particular, asymptotic behavior of Floquet solutions is fully deducible from analysis of so-called \emph{characteristic multipliers} of the system (see below) and it is known that solutions exhibit dramatically different characteristics depending on the multipliers, ranging from almost-exponential decaying to 0, through formation of periodic limit cycles to even unbounded growth, or ``blowing up'', as $t\to\infty$. Fortunately, in our case of Markovian dynamics, the infinite growth scenario is forbidden (loosely speaking, by contractivity of dynamical maps), however other possibilities remain.

Let us now assume that $X$, given in the Floquet normal form is \emph{diagonalizable}, i.e.~satisfies an eigenequation $X(\varphi_j) = \mu_j \varphi_j$ for $\mu_j \in \complexes$, $j \in \{1,\,2,\, ... \, , \, d^2\}$, such that set $\{\varphi_j\}$ is linearly independent and hence a \emph{basis} in $\matrd$. Monodromy matrix $\Lambda_T = e^{TX}$ satisfies the eigenequation for the same set of matrices,
\begin{equation}
	\Lambda_T (\varphi_j) = \lambda_j \varphi_j, \quad \lambda_j = e^{\mu_j T}, \quad j \in \{1,\,2,\, ... \, , \, d^2\},
\end{equation}
and by spectral mapping theorem, $e^{tX}(\varphi_j) = e^{\mu_j t}\varphi_j$, $t\in\reals_+$. We then call $\spec{\Lambda_T} = \{\lambda_j\}$ the set of \emph{characteristic multipliers} and $\spec{X} = \{\mu_j\}$ the set of \emph{characteristic exponents} of the system. Note that $\{\mu_j\}$ is not uniquely defined by monodromy matrix, as shifting transformation $\{\mu_j\} + \frac{2\pi i}{T} \vec{k}$, $\vec{k}\in\integers^{d^2}$, leaves $\spec{\Lambda_T}$ unchanged (simply, $\log{\Lambda_T}$ is non-unique). Now, define a set of functions
\begin{equation}
	\rho_j (t) = \Lambda_t (\varphi_j) =e^{\mu_j t}\phi_j (t), \quad \phi_j (t) = P_t (\varphi_j),
\end{equation}
which are naturally \emph{solutions} to the MME in question, i.e.~\emph{states}. By diagonalizability of $X$, set $\{\rho_j\}$ is a \emph{fundamental} set of solutions. The general solution for \eqref{eq:IntroductionMME} is then expressible as a linear combination
\begin{equation}\label{eq:FloquetGenSol}
	\rho_t = \sum_{j=1}^{d^2} c_j \rho_j (t) = \sum_{j=1}^{d^2} c_j e^{\mu_j t} \phi_j (t),
\end{equation}
where coefficients $c_j$ are prescribed by initial condition $\rho_0 = \sum_{j=1}^{d^2} c_j \varphi_j$. Otherwise, if $X$ is considered non-diagonalizable, a fundamental set of solutions loses the above simple structure and reflects the Jordan normal form of $X$; see e.g. \cite{Yakubovich1975} for further details. Evidently, by periodicity of \emph{Floquet states} $\phi_j (t)$, we have also $\rho_j (t+nT) = \lambda_{j}^{n} \rho_j (t)$. As a result, the stroboscopic dynamics $\Lambda_{nT}$ simply \emph{multiplies} the initial state $\rho_j (0)$ by factor $\lambda_{j}^{n} = e^{n\mu_j T}$ and a long-time behavior of solution is directly influenced by properties of characteristic multipliers ($\mathbb{S}^{1}$ denotes a unit circle in $\complexes$):

\begin{proposition}\label{prop:Stability}
The long-time behavior of solution $\rho_j (t)$ is determined by the characteristic multiplier $\lambda_j$ in a following way \cite{Yakubovich1975}:
\begin{itemize}
	\item If $|\lambda_j|<1$, then $\rho_j (t)$ vanishes as $t\to\infty$.
	\item If $\lambda_j = 1$, then $\rho_j (t)$ is \emph{periodic}. If, on the other hand $\lambda_j \in \mathbb{S}^{1} \setminus \{1\}$, then $\rho_j (t)$ is \emph{pseudo-periodic}, i.e. satisfies equality $\rho_j (t+T) = e^{i\theta}\rho_j (t)$ for some $\theta \in [0,2\pi )$; in particular, for $\lambda_j = -1$, solution $\rho_j (t)$ flips a sign, $\rho_j (t+T) = -\rho_j (t)$, which is sometimes referred to as \emph{anti-periodicity}.
	\item If $|\lambda_j|>1$, then $\rho_j (t)$ grows infinitely in norm.
\end{itemize}
\end{proposition}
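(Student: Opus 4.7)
The plan is to reduce all three cases to the single transformation rule
\begin{equation*}
	\rho_j(t+T) \;=\; \lambda_j\,\rho_j(t),
\end{equation*}
which follows at once from the definition $\rho_j(t) = e^{\mu_j t}\phi_j(t)$, the $T$-periodicity of $\phi_j(t) = P_t(\varphi_j)$, and the identity $\lambda_j = e^{\mu_j T}$. Iterating yields $\rho_j(t+nT) = \lambda_j^n\,\rho_j(t)$ for every $n\in\naturals$. Writing $\mu_j = \alpha_j + i\beta_j$ with $\alpha_j,\beta_j\in\reals$, we have $|\lambda_j| = e^{\alpha_j T}$, so the three regimes $|\lambda_j| < 1$, $|\lambda_j| = 1$, $|\lambda_j| > 1$ correspond precisely to the sign of $\alpha_j$.

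Before splitting into cases I would record two elementary properties of $\phi_j$: by absolute continuity of $t\mapsto P_t$ combined with $T$-periodicity, $\phi_j$ is bounded, with $M_j := \sup_{t\in[0,T]}\|\phi_j(t)\| < \infty$; and at stroboscopic times $\phi_j(nT) = P_0(\varphi_j) = \varphi_j \neq 0$. The three cases are then straightforward. If $|\lambda_j|<1$, then $\alpha_j<0$ and $\|\rho_j(t)\| \leq M_j\,e^{\alpha_j t}\to 0$ as $t\to\infty$. If $|\lambda_j|=1$, then $\alpha_j=0$ and $\mu_j = i\beta_j$, so $\rho_j(t+T) = e^{i\beta_j T}\rho_j(t) = \lambda_j\rho_j(t)$: the subcase $\lambda_j=1$ is plain $T$-periodicity, the subcase $\lambda_j = e^{i\theta} \in \mathbb{S}^1\setminus\{1\}$ is the asserted pseudo-periodicity with phase $\theta\in(0,2\pi)$, and the special value $\theta = \pi$ recovers anti-periodicity. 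If $|\lambda_j|>1$, then $\alpha_j>0$, and the stroboscopic subsequence $\|\rho_j(nT)\| = |\lambda_j|^n\|\varphi_j\|\to\infty$ witnesses the unbounded norm growth.

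I do not anticipate any genuine difficulty; the argument is a direct consequence of the Floquet product structure $\Lambda_t = P_t e^{tX}$, of the eigenequation $X(\varphi_j) = \mu_j\varphi_j$, and of the fact that the periodic factor $\phi_j$ is bounded above and non-vanishing at integer multiples of $T$. The only subtle point worth flagging is the interpretation of \emph{grows infinitely in norm}: since $\phi_j$ may in general vanish at some points inside $[0,T)$, one cannot expect the pointwise limit $\lim_{t\to\infty}\|\rho_j(t)\|$ to exist, but unboundedness along the stroboscopic sequence $\{nT\}_{n\in\naturals}$ is automatic, which matches the stroboscopic viewpoint emphasized in Section~\ref{sec:GeneralConsiderations}.
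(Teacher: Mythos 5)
Your proof is correct and follows exactly the route the paper itself indicates: it establishes the relation $\rho_j(t+nT)=\lambda_j^n\rho_j(t)$ from the Floquet form and periodicity of $\phi_j$, which is precisely the identity the paper records immediately before the proposition (the paper then defers the case analysis to the cited reference rather than writing it out). The only remark worth adding is that your caveat about $\phi_j$ possibly vanishing inside $[0,T)$ is moot, since $P_t$ is invertible for every $t$, so $\phi_j(t)=P_t(\varphi_j)$ is bounded away from zero on $[0,T]$ and the growth in the case $|\lambda_j|>1$ is in fact pointwise, not merely stroboscopic.
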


Naturally, $|\lambda_j|\leqslant 1$ guarantees that the solution is \emph{stable} and if $|\lambda_j|>1$, \emph{unstable} (``blows up'' at large times); hence, a general solution \eqref{eq:FloquetGenSol} will be called \emph{asymptotically stable}, if and only if all $|\lambda_j| \leqslant 1$. Fortunately, in case of quantum dynamics, unstability of solutions is disallowed by spectral properties of monodromy matrix:

\begin{proposition}\label{prop:AllStable}
Spectrum of $\Lambda_T$ lays inside unit disc $\mathbb{D}^1$ and is invariant with respect to complex conjugation. In the result, all solutions of MME \eqref{eq:IntroductionMME} for periodic $L_t$ in standard form \eqref{eq:Lindbladian2} are asymptotically stable.
\end{proposition}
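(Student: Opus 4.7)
The plan is to decompose the statement into three pieces and dispatch them in order: first, the inclusion $\spec{\Lambda_T} \subseteq \mathbb{D}^1$; second, the conjugation symmetry of $\spec{\Lambda_T}$; and third, the asymptotic stability claim, which should come out for free from the first piece combined with Proposition \ref{prop:Stability}.

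For the spectral bound, I would exploit that $\Lambda_T = e^{TX}$ is by construction the value of the CPTP dynamical map $\Lambda_t$ at the period, hence $\Lambda_T \in \cptp{\matrd}$ and in particular a trace-norm contraction. Given any eigenpair with $\Lambda_T(\varphi) = \lambda \varphi$, $\varphi \neq 0$, iteration yields $\Lambda_T^n(\varphi) = \lambda^n \varphi$ for every $n \in \naturals$; taking the trace norm on both sides gives $|\lambda|^n \|\varphi\|_1 = \|\Lambda_T^n(\varphi)\|_1 \leqslant \|\varphi\|_1$, and letting $n \to \infty$ forces $|\lambda| \leqslant 1$. (A parallel route is to note that the dual $\Lambda_T^\adj$ is CP and unital, hence by Russo--Dye an operator-norm contraction with $\spec{\Lambda_T^\adj} \subseteq \mathbb{D}^1$, and then transfer the bound to $\Lambda_T$ via $\spec{\Lambda_T} = \overline{\spec{\Lambda_T^\adj}}$.) For the conjugation symmetry, I would invoke that $\Lambda_T$ is Hermiticity preserving (either directly from complete positivity, or via Proposition \ref{prop:SimulHPTP}): if $\Lambda_T(\varphi) = \lambda \varphi$, then taking Hermitian conjugates and using $\Lambda_T(x)^\hadj = \Lambda_T(x^\hadj)$ gives $\Lambda_T(\varphi^\hadj) = \bar{\lambda}\, \varphi^\hadj$, so that $\bar{\lambda} \in \spec{\Lambda_T}$ whenever $\lambda$ is, establishing $\spec{\Lambda_T} = \overline{\spec{\Lambda_T}}$.

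The asymptotic stability statement then follows immediately: every characteristic multiplier $\lambda_j$ satisfies $|\lambda_j| \leqslant 1$, so the blow-up scenario of Proposition \ref{prop:Stability} is excluded and each fundamental solution $\rho_j(t)$ is either decaying or (pseudo-)periodic. As the general solution \eqref{eq:FloquetGenSol} is a finite linear combination of such $\rho_j$, it inherits asymptotic stability. I do not expect any serious obstacle here; the only point requiring care is to make explicit that $\Lambda_T$ inherits complete positivity and trace preservation from the standard-form Lindbladian hypothesis on $L_t$, so that both the contraction bound and the Hermiticity-preservation argument apply.
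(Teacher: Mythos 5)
Your proposal is correct and follows essentially the same route as the paper: bound the spectrum inside $\mathbb{D}^1$ via contractivity of the CPTP map $\Lambda_T$, obtain conjugation symmetry from Hermiticity preservation by taking adjoints of the eigenequation, and conclude stability from Proposition \ref{prop:Stability}. The only cosmetic difference is that your primary argument iterates the trace-norm contraction on $\Lambda_T$ directly, whereas the paper passes to the unital completely positive dual $\Lambda_T^{\dual}$ and uses that it attains its norm at $I$ — a variant you yourself note as the parallel route.
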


\begin{proof}

Stability is a straightforward consequence of a known fact, that the spectral radius of completely positive and trace preserving map is exactly 1. This easily follows from spectral properties of a dual map $\Lambda_{T}^{\dual}$ which is necessarily unital and completely positive on $\matrd$ and attains its norm at $I$ \cite[Proposition 3.6]{Paulsen2003}. As $\Lambda_{T}^{\dual}$ is also a \emph{*-map}, taking the Hermitian adjoint of eigenequation $\Lambda_{T}^{\dual}(x) = \lambda x$ for any $\lambda\in\spec{\Lambda_{T}^{\dual}}\setminus\{1\}$ and some $x\in\matrd$, yields $\overline{\lambda}$ is also an eigenvalue for eigenvector $x^{\hadj}$. This shows that $\spec{\Lambda_{T}^{\dual}}\setminus \{1\}$ is either real or consists of pairs $\{\lambda,\overline{\lambda}\}$, $|\lambda|\leqslant 1$, i.e.~$\spec{\Lambda_{T}}$ is invariant w.r.t. complex conjugation. This finally shows that solutions of a form $\rho_j (t) = e^{\mu_j t}\phi_j (t)$, as well as any general solution \eqref{eq:FloquetGenSol}, are all stable by proposition \ref{prop:Stability}. 
\end{proof}

\begin{proposition}\label{prop:VarphiProperties}
The following claims hold:
\begin{enumerate}
	\item\label{claim:VarphiPropertiesOne} If $\lambda\neq 1$, then $\tr{\varphi} = 0$ and $\varphi$ is not positive semi-definite;
	\item\label{claim:VarphiPropertiesThree} There exists $\varphi\in E_{\Lambda_T}(1)$, such that $\varphi\geqslant 0$;
	\item\label{claim:VarphiPropertiesFour} If $\varphi \in E_{\Lambda_T}(\lambda)$ for $\lambda\in\spec{\Lambda_T}\setminus\reals$, then $\varphi^{\hadj}\in E_{\Lambda_T}(\overline{\lambda})$. If eigenvalue $\lambda\in\reals$ is simple (i.e.~$k_\lambda = 1$) then $\varphi$ is Hermitian.
\end{enumerate}
\end{proposition}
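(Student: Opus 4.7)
The statement splits into three essentially independent assertions about the eigenvectors $\varphi$ of the monodromy matrix $\Lambda_T$. My plan is to dispatch each in turn using only trace preservation, the $*$-map (Hermiticity-preserving) property of $\Lambda_T$, and one topological fixed-point argument.

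For claim \eqref{claim:VarphiPropertiesOne} the key observation is trace preservation: from $\Lambda_T(\varphi) = \lambda\varphi$ together with $\tr \Lambda_T(\varphi) = \tr \varphi$, I would conclude $(\lambda - 1)\tr \varphi = 0$, whence $\tr \varphi = 0$ whenever $\lambda \neq 1$. Since any nonzero positive semi-definite matrix has strictly positive trace, this immediately rules out $\varphi \geqslant 0$.

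For claim \eqref{claim:VarphiPropertiesThree} I need to exhibit a positive semi-definite fixed point of $\Lambda_T$. The natural strategy is to restrict $\Lambda_T$ to the set of density matrices in $\matrd$, which is nonempty, convex and compact, and is mapped into itself by the CPTP map $\Lambda_T$. Brouwer's (or Schauder's) fixed-point theorem then produces a density matrix fixed by $\Lambda_T$, which is a positive semi-definite element of $E_{\Lambda_T}(1)$. This is a standard argument for CPTP maps in finite dimension, and is probably the step where I would be most tempted simply to cite a reference rather than reproduce the topological argument; still, it is the least algebraically trivial of the three items.

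For claim \eqref{claim:VarphiPropertiesFour} I would exploit the fact that $\Lambda_T$, being a composition of completely positive trace preserving maps, is itself a $*$-map. Taking the Hermitian conjugate of $\Lambda_T(\varphi) = \lambda \varphi$ and using $\Lambda_T(\varphi)^{\hadj} = \Lambda_T(\varphi^{\hadj})$ yields $\Lambda_T(\varphi^{\hadj}) = \overline{\lambda}\,\varphi^{\hadj}$, which is the first half of the claim. For the second half, assume $\lambda \in \reals$ with $k_\lambda = 1$; then $\varphi^{\hadj} \in E_{\Lambda_T}(\lambda)$ as well, and by one-dimensionality $\varphi^{\hadj} = c\,\varphi$ for some $c \in \complexes$. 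Rather than solve for the phase of $c$ directly, I would observe that both $\varphi + \varphi^{\hadj}$ and $i(\varphi - \varphi^{\hadj})$ are Hermitian and lie in $E_{\Lambda_T}(\lambda)$, and they cannot both vanish (else $\varphi = 0$). Hence at least one is a Hermitian eigenvector, and by the one-dimensionality of the eigenspace every eigenvector in $E_{\Lambda_T}(\lambda)$ is a scalar multiple of it, so can be rescaled to be Hermitian. No single step here poses a real obstacle; the only minor subtlety worth stating carefully is that simplicity of $\lambda$ is what lets us pass from ``Hermitian vector exists in the eigenspace'' to ``$\varphi$ itself is (proportional to a) Hermitian matrix.''
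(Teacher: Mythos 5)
Your proof is correct, and for claims \eqref{claim:VarphiPropertiesOne} and \eqref{claim:VarphiPropertiesFour} it follows essentially the paper's own lines: trace preservation gives $(1-\lambda)\,\tr{\varphi}=0$, the identity $\tnorm{\varphi}=\tr{\varphi}$ for $\varphi\geqslant 0$ rules out positivity when the trace vanishes, and Hermiticity preservation sends $E_{\Lambda_T}(\lambda)$ to $E_{\Lambda_T}(\overline{\lambda})$ under the adjoint. The one genuine divergence is claim \eqref{claim:VarphiPropertiesThree}: the paper invokes the Evans--H{\o}egh-Krohn theorem (a positive map on a finite-dimensional C*-algebra has a positive eigenvector for the eigenvalue equal to its spectral radius) together with the fact that the spectral radius of a completely positive trace preserving map is $1$, whereas you apply Brouwer's fixed-point theorem to the restriction of $\Lambda_T$ to the compact convex set of density matrices. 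Both routes are valid; yours is more elementary and self-contained, needing only that $\Lambda_T$ is a continuous affine self-map of the state space and no spectral input at all, while the paper's citation buys a statement that holds for merely positive (not necessarily trace preserving) maps and ties the positive eigenvector to the spectral radius. On claim \eqref{claim:VarphiPropertiesFour} your symmetrization via $\varphi+\varphi^{\hadj}$ and $i(\varphi-\varphi^{\hadj})$ is in fact slightly more careful than the paper's own step: the paper argues that linear dependence of $\varphi$ and $\varphi^{\hadj}$ in a one-dimensional eigenspace forces them to be \emph{equal}, whereas linear dependence only gives $\varphi^{\hadj}=c\varphi$ with $|c|=1$, i.e.\ Hermiticity up to a phase --- which is exactly what your argument, and the statement properly read (eigenvectors being defined only up to scalars), actually delivers.
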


\begin{proof}
For claim \ref{claim:VarphiPropertiesOne}, note that as $\Lambda_T$ satisfies eigenequation $\Lambda_T (\varphi) = \lambda\varphi$, trace preservation condition demands
\begin{equation}\label{eq:TPcondition}
	(1-\lambda)\, \tr{\varphi} = 0.
\end{equation}
Let $\lambda\neq 1$ and thus $\tr{\varphi} = 0$. Assume indirectly $\varphi \geqslant 0$; then its trace norm $\tnorm{\varphi} = \tr{\varphi} = 0$, which is possible if and only if $\varphi = 0$, a contradiction; therefore any eigenvector $\varphi$ corresponding to eigenvalue $\lambda\neq 1$ must not be positive semi-definite. For claim \ref{claim:VarphiPropertiesThree}, we will utilize another known result which states that \emph{if $\Phi$ is positive on finite dimensional C*-algebra $\mathscr{A}$ and $r$ is its spectral radius, then there exists eigenvector $x\in\mathscr{A}$ such that $\Phi(x) = r x$ and $x \geqslant 0$} \cite[Theorem 2.5]{Evans1978}. As spectral radius of $\Lambda_{T}$ is 1, demanding $\lambda = 1$ yields that eigenequation $\Lambda_T (\varphi)=\varphi$ is satisfied for (at least one) matrix $\varphi\geqslant 0$. Finally, claim \ref{claim:VarphiPropertiesFour} is a direct consequence of Hermiticity preservation: given $\varphi\in E_{\Lambda_T}(\lambda)$, the adjoint of eigenequation $\Lambda_T (\varphi) = \lambda \varphi$ gives $\varphi^\hadj \in E_{\Lambda_T}(\overline{\lambda})$. If $\lambda\in\reals$, then $\varphi, \varphi^\hadj \in E_{\Lambda_T}(\lambda)$ and $k_\lambda \geqslant 2$. If $\lambda$ is simple, then $\dim{E_{\Lambda_T}(\lambda)}=1$ and eigenvectors $\varphi$, $\varphi^\hadj$ must be linearly dependent, which is possible only if they are equal.
\end{proof}

Finally, we summarize by noticing that in certain scenario, completely positive dynamics will always admit a periodic steady state:

\begin{theorem}\label{thm:PeriodicSteadyState}
Each solution $\rho_t$ becomes arbitrarily close to a certain function $t\mapsto \rho_{t}^{\infty}$, uniformly in space $\mathcal{C}_0([t_0,\infty),\matrd)$ of continuous matrix-valued functions, for $t_0 \geqslant 0$ large enough. If in addition
\begin{equation}\label{eq:PeriodicSteadyStateCond}
	\spec{\Lambda_T}\cap (\mathbb{S}^1 \setminus \{1\}) = \emptyset ,
\end{equation}
then $\rho_{t}^{\infty}$ is an asymptotic periodic limit cycle, i.e.~a periodic steady state.
\end{theorem}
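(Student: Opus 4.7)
The plan is to exploit the Floquet representation $\Lambda_t = P_t e^{tX}$ together with a spectral decomposition of the monodromy matrix $\Lambda_T$. By proposition \ref{prop:AllStable}, $\spec{\Lambda_T}\subset \mathbb{D}^1$. I would split the spectrum as $\sigma_{<} = \spec{\Lambda_T}\cap\{|z|<1\}$ and $\sigma_{=} = \spec{\Lambda_T}\cap\mathbb{S}^1$, and let $V_{<}$, $V_{=}$ be the corresponding (generalized) $\Lambda_T$-invariant spectral subspaces, so that $\matrd = V_{<}\oplus V_{=}$. Since $X$ commutes with $\Lambda_T = e^{TX}$, both subspaces are also invariant under $X$ and under $e^{tX}$ for every $t$. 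Decomposing $\rho_0 = \rho_0^{<} + \rho_0^{=}$ accordingly, I would define the candidate asymptotic function as
\begin{equation}
	\rho_t^{\infty} := P_t\, e^{tX}(\rho_0^{=}), \qquad \rho_t - \rho_t^{\infty} = P_t\, e^{tX}(\rho_0^{<}).
\end{equation}

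The main obstacle is to rule out nontrivial Jordan blocks of $\Lambda_T$ at eigenvalues on $\mathbb{S}^1$, which would otherwise force polynomial growth of $e^{tX}|_{V_{=}}$ and destroy the picture. For this I would invoke the stroboscopic identity $\Lambda_T^n = \Lambda_{nT}$: each power is completely positive and trace preserving, hence a trace-norm contraction, so the family $\{\Lambda_T^n : n\in\naturals\}$ is uniformly bounded in operator norm. A Jordan block of size $\geqslant 2$ at $\lambda\in\mathbb{S}^1$ would produce matrix entries growing at least linearly in $n$, contradicting this uniform bound; hence every unit-modulus eigenvalue of $\Lambda_T$ is semisimple and $e^{tX}|_{V_{=}}$ is bounded uniformly in $t\geqslant 0$.

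With this in hand, convergence is immediate. The restriction $X|_{V_{<}}$ has spectrum $\{T^{-1}\log\lambda : \lambda\in\sigma_{<}\}$ with strictly negative real parts, so there exist constants $\alpha, M>0$ with $\|e^{tX}|_{V_{<}}\|\leqslant M e^{-\alpha t}$. Periodicity and piecewise continuity of $P_t$ imply it is uniformly bounded by some $C>0$, giving
\begin{equation}
	\|\rho_t - \rho_t^{\infty}\| \leqslant CM\, e^{-\alpha t}\,\|\rho_0^{<}\|,
\end{equation}
which yields uniform convergence on $[t_0,\infty)$ that can be made arbitrarily small by choosing $t_0$ sufficiently large.

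For the final claim, note that $1\in\spec{\Lambda_T}$ by proposition \ref{prop:VarphiProperties}, so condition \eqref{eq:PeriodicSteadyStateCond} is equivalent to $\sigma_{=} = \{1\}$. Combined with the semisimplicity established above, this forces $\Lambda_T|_{V_{=}}$ to act as the identity, i.e.~$e^{TX}(\rho_0^{=}) = \rho_0^{=}$. Then
\begin{equation}
	\rho_{t+T}^{\infty} = P_{t+T}\, e^{(t+T)X}(\rho_0^{=}) = P_t\, e^{tX}\, e^{TX}(\rho_0^{=}) = P_t\, e^{tX}(\rho_0^{=}) = \rho_t^{\infty},
\end{equation}
so $\rho_t^{\infty}$ is an exact $T$-periodic limit cycle, as required.
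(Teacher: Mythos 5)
Your proof is correct, but it follows a genuinely different route from the paper's. The paper works from the explicit fundamental-solution expansion \eqref{eq:FloquetGenSol}, which presupposes the diagonalizability of $X$ assumed in section \ref{sec:Stability}: it splits $\spec{X}$ into the four sets \eqref{eq:SpectralDecLambdaT}, defines $\rho^{\infty}_t$ by discarding the $\mathcal{E}_3$ terms, and obtains the same exponential estimate; the payoff of that approach is an explicit description of $\rho^{\infty}_t$ as a sum of periodic, anti-periodic and pseudo-periodic pieces. You instead decompose $\matrd = V_{<}\oplus V_{=}$ into spectral subspaces of $\Lambda_T$ and, crucially, rule out nontrivial Jordan blocks at unit-modulus eigenvalues via power-boundedness of the stroboscopic maps $\Lambda_T^n=\Lambda_{nT}\in\cptp{\matrd}$; this removes the diagonalizability hypothesis entirely and is the more robust argument. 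Two small points worth writing out: first, the passage from semisimplicity of $\Lambda_T|_{V_{=}}$ to uniform boundedness of $e^{tX}|_{V_{=}}$ for \emph{all} $t\geqslant 0$ (not just $t\in T\naturals$) goes through semisimplicity of $X|_{V_{=}}$, which follows because the exponential of a Jordan block of size $k$ is again similar to a Jordan block of size $k$; second, one should note that $V_{<}$ and $V_{=}$ coincide with the spectral subspaces of $X$ for $\{\Re{\mu}<0\}$ and $\{\Re{\mu}=0\}$ respectively (by the spectral mapping $|\lambda|=e^{T\Re{\mu}}$), so that $X|_{V_{<}}$ indeed has spectrum in the open left half-plane and the bound $\|e^{tX}|_{V_{<}}\|\leqslant Me^{-\alpha t}$ applies. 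With those remarks supplied, your argument is complete and slightly stronger than the one in the paper.
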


\begin{proof}
Properties of spectrum of monodromy matrix allow to decompose $\spec{X}$ into four disjoint subsets, $\spec{X} = \mathcal{E}_{1,\text{e}}\cup\mathcal{E}_{1,\text{o}}\cup\mathcal{E}_{2}\cup\mathcal{E}_{3}$, such that
\begin{align}\label{eq:SpectralDecLambdaT}
	&\mathcal{E}_{1,\text{e}} = \{\mu_j = \frac{2k_j \pi i}{T}, \, k_j \in\integers\}, \quad \mathcal{E}_{1,\text{o}} = \{\mu_j = \frac{(2k_j+1)\pi i}{T}, \, k_j \in\integers\}, \\ &\mathcal{E}_2 = i\reals \setminus (\mathcal{E}_{1,\text{e}}\cup\mathcal{E}_{1,\text{o}}), \quad \mathcal{E}_3 = \{\Re{\mu_j} < 0\}.\nonumber
\end{align}
Of course the function $\rho^{\infty}_t$ is constructed by re-grouping terms in expression \eqref{eq:FloquetGenSol} and deprecating the sum over set $\mathcal{E}_3$,
\begin{align}\label{eq:RhoPeriodicSteady}
	\rho^{\infty}_{t} &= \sum_{\mu_j \in \mathcal{E}_{1,\text{e}}} c_j e^{2i k_j \pi t / T} \phi_j (t) + \sum_{\mu_j \in \mathcal{E}_{1,\text{o}}} c_j e^{i(2k_j+1) \pi t / T} \phi_j (t) \\
	&+ \sum_{\mu_j \in \mathcal{E}_2} c_j e^{i \, \Im{\mu_j} t} \phi_j (t).\nonumber
\end{align}
Indeed, direct calculations allow to estimate
\begin{equation}
	\tnorm{\rho_t - \rho_{t}^{\infty}} = \left\| \sum_{\mu_j\in\mathcal{E}_3}c_j e^{\mu_j t} \phi_{j}(t) \right\|_{1} \leqslant A e^{-at},
\end{equation}
where $a = \max{|\Re{\mu_j}|}$ and $A$ is a positive constant. Taking any $\epsilon > 0$, one checks that for $t_0 = \frac{1}{a} \ln{\frac{A}{\epsilon}}$ we have $\sup_{t\geqslant t_0}{\tnorm{\rho_t - \rho_{t}^{\infty}}} \leqslant \epsilon$, i.e.~functions $\rho_t$, $\rho_{t}^{\infty}$ are indeed arbitrarily close to each other in uniform topology in $\mathcal{C}_0([t_0,\infty),\matrd)$.

The first sum in \eqref{eq:RhoPeriodicSteady} is \emph{periodic} and the second one is \emph{anti-periodic} (flips a sign after every time shift by $T$); every term appearing in third sum is \emph{pseudo-periodic} (as time-shifting by $T$ shifts coefficients $c_j$ by phase factors, $c_j \mapsto c_j e^{i\, \Im{\mu_j}T}$). Now, if condition \eqref{eq:PeriodicSteadyStateCond} is satisfied then one automatically has $\mathcal{E}_{1,\text{o}}=\mathcal{E}_{2}=\emptyset$ and only the periodic part of \eqref{eq:RhoPeriodicSteady} remains.
\end{proof}

\section{Commutative Lindbladian families}
\label{sec:CommLindbladian}

Here we inspect a simplified class of \emph{commutative} Lindbladian, which provides an exactly solvable case. We assume that the family $\{L_t : t \in \reals_+\}$ of periodic Lindbladians in standard form \eqref{eq:Lindbladian2} satisfies commutativity condition
\begin{equation}\label{eq:CommCondition}
	L_t L_s (x) = L_s L_t (x), \quad t,s \in \reals_+, \, x \in \matrd .
\end{equation}
\subsection{CP-divisibility of Floquet normal form}

The core result of this section, presented in form of theorems \ref{prop:etXcpDiv} and \ref{prop:NotEverywhereCPdiv} below, shows that for special case of commutative Lindbladians \eqref{eq:CommCondition}, both maps of Floquet pair $(P_t, e^{tX})$ can be simultaneously Markovian over some intervals in $\reals_+$ and the semigroup part $e^{tX}$ in fact is Markovian in whole $\reals_+$. However, it is not true for the periodic part $P_t$ as an interesting property is revealed: it is \emph{impossible} for $P_t$ to be uniformly Markovian over a whole time of evolution. The question of simultaneous CP-divisibility of Floquet pair, stated in the Introduction, is hence answered negatively.

\begin{theorem}\label{prop:etXcpDiv}
Let $L_t$ be of standard form \eqref{eq:Lindbladian2}, periodic and obeying the commutativity condition \eqref{eq:CommCondition}. Then, it generates a CP-divisible quantum dynamical map $\Lambda_t$ admitting Floquet normal form $(P_t , e^{tX})$ such that:
\begin{enumerate}
	\item\label{claim2} $\{e^{tX} : t\in\reals_+ \}\subset \cptp{\matrd}$ and is CP-divisible contraction semigroup (i.e.~a quantum dynamical semigroup);
	\item\label{claim1} $P_t$, $t\in\reals_+$, is a trace preserving *-map;
	\item\label{claim:PtCPdivcondition} $P_t$ is CP-divisible in interval $\mathcal{I}\subset\reals_+$ if and only if
	\begin{equation}\label{eq:PtCPdivcondition}
		\mathbf{a}_t - \frac{1}{T} \int_{0}^{T} \mathbf{a}_{t^\prime} dt^\prime \in \matr{d^2-1}^{+} \quad \text{for every }  t\in \mathcal{I};
	\end{equation}
	\item\label{claim:PtCPcondition} $P_t$ is completely positive for some $t\in\reals_+$, if
	\begin{equation}\label{eq:PtCPsuffCond}
		\int_{0}^{t} \mathbf{a}_{t^\prime} dt^\prime - \frac{t}{T} \int_{0}^{T} \mathbf{a}_{t^\prime} dt^\prime \in \matr{d^2-1}^{+}.
	\end{equation}
\end{enumerate}
\end{theorem}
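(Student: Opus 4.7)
The plan is to exploit commutativity to obtain a closed-form exponential solution of the MME, read off a concrete Floquet pair from it, and then check each of the four claims by inspecting the Kossakowski structure of the generators that appear.

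First, because $\{L_t\}$ is a commuting family of bounded maps, the fundamental solution of \eqref{eq:OperatorMME} is simply
\begin{equation*}
	\Lambda_t = \exp\!\left(\int_0^t L_{t'}\,dt'\right),
\end{equation*}
and I would define the candidate pair by setting $X := \bar L$, with $\bar L := \tfrac{1}{T}\int_0^T L_{t'}\,dt'$, and $P_t := \exp(N_t)$, with $N_t := \int_0^t (L_{t'} - \bar L)\,dt'$. Commutativity gives $\Lambda_t = P_t e^{tX}$; a direct computation using periodicity of $L_t$ yields $N_{t+T} = N_t$ (so $P_{t+T} = P_t$) and $N_0 = 0$ (so $P_0 = \id$), confirming this is a valid Floquet pair. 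Claim \ref{claim2} then follows at once: $\bar L$ is in standard form \eqref{eq:Lindbladian2} with averaged Hermitian Hamiltonian $\bar H$ and Kossakowski matrix $\bar{\mathbf{a}} = \tfrac{1}{T}\int_0^T \mathbf{a}_{t'}\,dt'$, which is positive semi-definite as an average of positive semi-definite matrices, so $\{e^{tX}\}$ is a quantum dynamical semigroup. For claim \ref{claim1} I would observe that $N_t$ is Hermiticity-preserving and trace-annihilating (as each $L_{t'} - \bar L$ is), and a termwise check of the exponential series $e^{N_t} = \id + \sum_{k\geqslant 1} N_t^k / k!$ then shows that $P_t$ is a trace-preserving *-map.

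For claim \ref{claim:PtCPdivcondition}, commutativity again gives the propagator in closed form:
\begin{equation*}
	V_{t,s} = P_t P_s^{-1} = \exp\!\left(\int_s^t (L_{t'} - \bar L)\,dt'\right),
\end{equation*}
so $\tfrac{d}{dt}V_{t,s} = (L_t - \bar L) V_{t,s}$ with $V_{s,s} = \id$; i.e.\ the evolution of $\{P_t\}$ is governed by the \emph{instantaneous} generator $\tilde L_t := L_t - \bar L$. Under the piecewise-continuity hypothesis imposed earlier, CP-divisibility of $\{P_t\}$ on $\mathcal{I}$ is equivalent to $\tilde L_t$ being in standard Lindblad form for every $t \in \mathcal{I}$. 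Since $\tilde L_t$ has coherent part $H_t - \bar H$ (automatically Hermitian) and Kossakowski matrix $\mathbf{a}_t - \bar{\mathbf{a}}$, this is precisely condition \eqref{eq:PtCPdivcondition}. Claim \ref{claim:PtCPcondition} then follows from the same calculation applied to $N_t$ itself: the Hamiltonian part $\int_0^t (H_{t'}-\bar H)\,dt'$ is Hermitian, and if the Kossakowski matrix $\int_0^t \mathbf{a}_{t'}\,dt' - t\bar{\mathbf{a}}$ is positive semi-definite then $N_t$ is in standard form \eqref{eq:Lindbladian2}, so $P_t = e^{N_t}$ is the time-one element of a quantum dynamical semigroup and hence in $\cptp{\matrd}$.

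The only delicate point I expect is the ``only if'' direction of claim \ref{claim:PtCPdivcondition}: one needs to argue that a \emph{pointwise} failure of positivity of $\mathbf{a}_t - \bar{\mathbf{a}}$ actually destroys complete positivity of the propagator $V_{t',s}$ for some nearby $t',s \in \mathcal{I}$. The cleanest route I see is via the Choi matrix of $V_{t+h,t}$: expanding to first order in $h$, complete positivity would force the Choi matrix of $\tilde L_t$ to be conditionally positive in the sense of Gorini--Kossakowski--Sudarshan, hence force $\mathbf{a}_t - \bar{\mathbf{a}} \geqslant 0$. Piecewise continuity of $\mathbf{a}_t$ is then exactly what lets one translate such an instantaneous violation into a concrete violation on a sufficiently short subinterval of $\mathcal{I}$.
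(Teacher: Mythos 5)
Your proposal is correct and follows essentially the same route as the paper: the same choice $X=\frac{1}{T}\int_0^T L_{t'}\,dt'$ and $P_t=\exp\int_0^t(L_{t'}-X)\,dt'$, the same positivity argument for the averaged Kossakowski matrix, and the same reduction of claims \ref{claim:PtCPdivcondition} and \ref{claim:PtCPcondition} to the standard form of $L_t-X$ and of $\int_0^t(L_{t'}-X)\,dt'$, respectively. The only cosmetic difference is that you verify the trace-preserving *-map property of $P_t$ directly from the exponential series of $N_t$, whereas the paper invokes its Proposition~\ref{prop:SimulHPTP}; both are fine.
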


\begin{theorem}\label{prop:NotEverywhereCPdiv}
Map $P_t$ governed by Lindbladian \eqref{eq:Lindbladian2} satisfies the following:
\begin{enumerate}
	\item\label{claim:NotEverywhereCPdivOne} $P_t$ is CP-divisible everywhere in $\reals_+$ iff Kossakowski matrix $\mathbf{a}_t$ is constant;
	\item\label{claim:NotEverywhereCPdivTwo} If $\mathbf{a}_t$ is constant, then $P_t \in\cptp{\matrd}$ for all $t\in\reals_+$;
	\item\label{claim:NotEverywhereCPdivThree} If $\mathbf{a}_t$ is non-constant, then there exists a non-empty union of intervals $\mathcal{N} \subset \reals_+$ such that $P_t$ is not CP-divisible (non-Markovian) in $\mathcal{N}$.
\end{enumerate}
\end{theorem}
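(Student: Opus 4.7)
The plan is to base everything on the CP-divisibility criterion from Theorem~\ref{prop:etXcpDiv}, claim~\ref{claim:PtCPdivcondition}. Abbreviating the averaged Kossakowski matrix by
\[
\bar{\mathbf{a}} := \frac{1}{T}\int_{0}^{T} \mathbf{a}_{t'}\, dt',
\]
the criterion for $P_t$ to be CP-divisible on an interval $\mathcal{I}$ reads $\mathbf{a}_t - \bar{\mathbf{a}} \in \matr{d^2-1}^{+}$ for every $t \in \mathcal{I}$, and by construction $\int_{0}^{T} (\mathbf{a}_t - \bar{\mathbf{a}})\, dt = 0$. All three claims will be read off from this one inequality.

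For claim~\ref{claim:NotEverywhereCPdivOne}, the forward direction is immediate: if $\mathbf{a}_t$ is constant then $\mathbf{a}_t \equiv \bar{\mathbf{a}}$ and the condition holds identically with equality. For the converse, if $P_t$ is CP-divisible on all of $\reals_+$ then $\mathbf{a}_t - \bar{\mathbf{a}} \geqslant 0$ pointwise, so $\tr (\mathbf{a}_t - \bar{\mathbf{a}}) \geqslant 0$; integrating over a period and using $\int_{0}^{T}(\mathbf{a}_t - \bar{\mathbf{a}})\, dt = 0$ forces $\tr (\mathbf{a}_t - \bar{\mathbf{a}}) = 0$ almost everywhere on $[0,T]$, which for a positive semi-definite matrix gives $\mathbf{a}_t = \bar{\mathbf{a}}$ a.e. I then upgrade this a.e.\ identity to a pointwise one on each maximal interval of continuity of $t \mapsto \mathbf{a}_t$, and extend to the discontinuity points via the one-sided continuity hypothesis stated in the Introduction, concluding that $\mathbf{a}_t$ is constant on all of $\reals_+$. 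Claim~\ref{claim:NotEverywhereCPdivTwo} now comes for free: if $\mathbf{a}_t$ is constant, claim~\ref{claim:NotEverywhereCPdivOne} yields CP-divisibility of $P_t$ on all of $\reals_+$, so taking $s = 0$ in Definition~\ref{def:CPdiv} gives $V_{t,0} = P_t P_{0}^{-1} = P_t \in \cptp{\matrd}$ for every $t$.

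Claim~\ref{claim:NotEverywhereCPdivThree} then follows by contrapositive logic. Non-constancy of $\mathbf{a}_t$ combined with claim~\ref{claim:NotEverywhereCPdivOne} produces some $t^{\star} \in \reals_+$ at which $\mathbf{a}_{t^{\star}} - \bar{\mathbf{a}}$ fails to be positive semi-definite, witnessed by a vector $v \in \complexes^{d^2-1}$ with $\iprod{v}{(\mathbf{a}_{t^{\star}} - \bar{\mathbf{a}}) v} < 0$. On at least one side of $t^{\star}$ the map $t \mapsto \mathbf{a}_t$ is continuous by assumption, so the strict negativity of $\iprod{v}{(\mathbf{a}_t - \bar{\mathbf{a}}) v}$ persists on an entire one-sided neighborhood of $t^{\star}$; by Theorem~\ref{prop:etXcpDiv}, claim~\ref{claim:PtCPdivcondition}, this neighborhood is an interval on which $P_t$ fails to be CP-divisible, and $\mathcal{N}$ can be defined as the union of all such intervals. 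The main technical obstacle is precisely the interplay between piecewise continuity and the a.e.\ equality obtained in claim~\ref{claim:NotEverywhereCPdivOne}: at a discontinuity point only one-sided continuity is available, so some case analysis is needed both to lift a.e.\ constancy to pointwise constancy in the converse of claim~\ref{claim:NotEverywhereCPdivOne} and, in the non-constant case, to ensure that $\mathcal{N}$ genuinely contains an open interval rather than being a collection of isolated discontinuity points.
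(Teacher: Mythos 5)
Your proposal is correct and follows the same skeleton as the paper's proof: read everything off the criterion \eqref{eq:PtCPdivcondition} of theorem \ref{prop:etXcpDiv}, exploit that a non-negative function with vanishing integral over a period must vanish, and use the one-sided continuity hypothesis to pass from an almost-everywhere statement to a pointwise one at discontinuity points. Two of your steps are streamlined relative to the paper and are worth noting. First, the paper reduces the matrix inequality to scalars by fixing an arbitrary $\vec{x}\in\complexes^{d^2-1}$ and running the zero-integral argument on each quadratic form $f_{\vec{x}}(t)=\iprod{\vec{x}}{\mathbf{a}_t\vec{x}}$ separately (so constancy of $\mathbf{a}_t$ is recovered only after quantifying over all $\vec{x}$); your trace reduction does the same job with a single scalar function, since a positive semi-definite matrix with zero trace is zero --- a genuine economy. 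Second, you obtain claim \ref{claim:NotEverywhereCPdivTwo} from claim \ref{claim:NotEverywhereCPdivOne} by specializing definition \ref{def:CPdiv} to $s=0$ and using $P_0=\id$, whereas the paper invokes the separate integrated condition \eqref{eq:PtCPsuffCond}; both are valid. For claim \ref{claim:NotEverywhereCPdivThree} your argument is in fact slightly more explicit than the paper's: you exhibit a witness vector $v$ and use persistence of the strict inequality $\iprod{v}{(\mathbf{a}_t-\bar{\mathbf{a}})v}<0$ on a one-sided neighborhood of $t^{\star}$ to guarantee that $\mathcal{N}$ contains a non-degenerate interval, where the paper simply asserts that $\mathcal{N}$ is a union of intervals by piecewise continuity. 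No gaps; the discontinuity-point case analysis you flag is exactly the point the paper also has to handle, and your one-sided-limit argument disposes of it correctly.
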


\begin{proof}[Proof of theorem \ref{prop:etXcpDiv}]
Commutativity condition \eqref{eq:CommCondition} allows to avoid cumbersome \emph{time-ordering} procedure (like in Dyson expansion) and solution to MME \eqref{eq:IntroductionMME} is exactly obtainable. For brevity, let us introduce three antiderivatives
\begin{equation}\label{eq:AntiderDef}
	\mathcal{H}_t = \int_{0}^{t}H_{t^\prime} dt^{\prime}, \quad A_{jk}(t) = \int_{0}^{t} a_{jk}(t^\prime) dt^\prime, \quad \mathbf{A}_t = [A_{jk}(t)]_{jk} = \int_{0}^{t} \mathbf{a}_{t^\prime} dt^{\prime}.
\end{equation}
Define map $\Phi_t$ on $\matrd$ via
\begin{equation}
	\Phi_t = \exp{\int_{0}^{t} L_{t^\prime} dt^{\prime}} = \exp{\Bigg(-i\ad{\mathcal{H}_t} + \sum_{j,k=1}^{d^2-1} A_{jk}(t) D_{jk} \Bigg)}.
\end{equation}
Then, by direct calculation one can check, by expanding matrix exponentials into power series and applying commutativity condition \eqref{eq:CommCondition}, that $L_t$ commutes with $\Phi_t$ and $\Phi_t$ satisfies differential equation
\begin{equation}
	\frac{d}{dt}\Phi_t = \Phi_t L_t = L_t \Phi_t, \quad \Phi_0 = \id,
\end{equation}
which is simply the MME in question; hence we have $\Phi_t = \Lambda_t$ as $\Phi_t$ must be a unique solution and the monodromy matrix is $\Lambda_T = \exp{\int_{0}^{T}L_{t}dt}$. Finding $X$ requires one to solve an equation $\Lambda_T = e^{TX}$ by computing a logarithm of monodromy matrix (which is achieved by seeking for Jordan normal form of $\Lambda_T$; see e.g.~\cite{Higham2008} for details), which cannot be uniquely determined. In effect, one obtains an infinite family of valid logarithms; for our purpose however, it totally suffices to choose
\begin{equation}\label{eq:Xformula}
	X = \frac{1}{T} \int_{0}^{T} L_{t} dt = -\frac{i}{T} \ad{\mathcal{H}_{T}} + \frac{1}{T} \sum_{j,k=1}^{d^2-1} A_{jk}(T) D_{jk}.
\end{equation}
Clearly, $\mathcal{H}_{T}$ is Hermitian. Moreover, for any $\vec{x} = (x_i) \in \complexes^{d^2 - 1}$ and $t\in\reals_+$,
\begin{equation}
	\iprod{\vec{x}}{\mathbf{A}_t \vec{x}} = \sum_{j,k=1}^{d^2-1} A_{jk}(t) x_j \overline{x_k} = \int_{0}^{t} \Bigg( \sum_{j,k=1}^{d^2 - 1} a_{jk}(t^\prime) x_j \overline{x_k} \Bigg) dt^{\prime} \geqslant 0,
\end{equation}
since $[a_{jk}(t)]_{jk} \geqslant 0$; therefore, also $\mathbf{A}_t \geqslant 0$ for all $t\in\reals_+$ and $X$ chosen in \eqref{eq:Xformula} is of \emph{standard form}. In other words, if commutativity condition holds then there always exists map $X$ solving equation $\Lambda_T = e^{TX}$, which generates a Markovian semigroup; this proves claim \ref{claim2}. For claim \ref{claim1}, note that since $\{e^{tX} : t\in\reals_+ \}$ is a Markovian dynamics, then $P_t$ is also trace preserving *-map via proposition \ref{prop:SimulHPTP}.

By formula \eqref{eq:Xformula}, $X$ commutes with any integral of a form $\int_{t_1}^{t_2} L_{t^\prime} dt^{\prime}$ and therefore $\Lambda_t X = X \Lambda_t$. This in turn implies
\begin{equation}\label{eq:CommCasePtgeneral}
	P_t = \Lambda_t e^{-tX} = \exp{\int_{0}^{t} \left(L_{t^\prime}-X \right) dt^\prime}
\end{equation}
which further yields an explicit formula for $P_t$,
\begin{equation}\label{eq:CommCasePtgeneral2}
	P_t = \exp{\left[ -i \ad{\mathcal{H}_t - \frac{t}{T}\mathcal{H}_T} + \sum_{j,k=1}^{d^2-1} \left( A_{jk}(t)-\frac{A_{jk}(T)}{T}t\right) D_{jk} \right]}.
\end{equation}
By inspection, $P_t$ is clearly periodic. To show claim \ref{claim:PtCPdivcondition}, note that \eqref{eq:CommCasePtgeneral} implies
\begin{equation}
	\frac{dP_t}{dt} = (L_t - X)P_t, \quad P_0 = \id ,
\end{equation}
since $X$ and $P_t$ commute. By general considerations \cite{Rivas2012,Chruscinski2014a}, if some map $\Phi_t$ satisfies an ODE of a form $\frac{d}{dt}\Phi_t = \mathcal{G}_t \Phi_t$, then $\Phi_t$ is CP-divisible in interval $\mathcal{I}\subseteq\reals_+$ if and only if $\mathcal{G}_t$ is of standard form for every $t\in\mathcal{I}$. This shows that sufficient and necessary condition for CP-divisibility of $P_t$ is
\begin{equation}\label{eq:LtminusX}
	L_t - X = -i \ad{H_t - \frac{1}{T}\mathcal{H}_T} + \sum_{j,k=1}^{d^2-1} \left( a_{jk}(t)-\frac{A_{jk}(T)}{T}\right) D_{jk}
\end{equation}
being of standard form which, by obvious hermiticity of $H_t - \frac{1}{T}\mathcal{H}_T$, leads to condition \eqref{eq:PtCPdivcondition}. Finally, claim \ref{claim:PtCPcondition} is a direct consequence of the fact that under condition \eqref{eq:PtCPsuffCond} the map $P_t$ given by \eqref{eq:CommCasePtgeneral2} is an exponential of a standard form Lindbladian for given $t\in\reals_+$ and as such, must be completely positive. We note, that alternatively one can prove this fact directly by appropriately putting $P_t$ in Choi-Kraus form in a fashion similar to the proof of \cite[Theorem 4.2.1]{Rivas2012}; we omit this computation here, however.
\end{proof}

\begin{proof}[Proof of theorem \ref{prop:NotEverywhereCPdiv}]
Notice, that if $\mathbf{a}_t \geqslant 0$ is constant, conditions \eqref{eq:PtCPdivcondition} and \eqref{eq:PtCPsuffCond} given in theorem \ref{prop:etXcpDiv} are automatically satisfied so $P_t \in \cptp{\matrd}$ and is CP-divisible everywhere; this proves claim \ref{claim:NotEverywhereCPdivTwo} as well as necessity stated in claim \ref{claim:NotEverywhereCPdivOne}. For sufficiency, let us assume $P_t$ is CP-divisible everywhere in $[0,T)$ (and in $\reals_+$ in consequence). Then, for any $\vec{x} \in \complexes^{d^2 -1}$, define non-negative piecewise continuous function $f_{\vec{x}}(t)$ by
\begin{equation}
	f_{\vec{x}}(t) = \iprod{\vec{x}}{\mathbf{a}_t \vec{x}} = \sum_{j,k=1}^{d^2-1} a_{jk}(t) x_j \overline{x_k}
\end{equation}
and denote its restriction to $[0,T)$ by the same symbol. Everywhere CP-divisibility of $P_t$ yields, by theorem \ref{prop:etXcpDiv}, that condition \eqref{eq:PtCPdivcondition} is met for every $t\in [0,T)$, i.e.
\begin{equation}\label{eq:CPdivassump}
	f_{\vec{x}}(t) - \frac{1}{T} \int_{0}^{T} f_{\vec{x}}(t) dt \geqslant 0 \quad \text{for all }\vec{x}\in \complexes^{d^2-1}.
\end{equation}
Take any $\vec{x} \neq 0$. By the mean value theorem for definite integrals we have $\frac{1}{T}\int_{0}^{T} f_{\vec{x}}(t) dt = A$ for some $A\geqslant 0$ which satisfies
\begin{equation}\label{eq:CPdivassump2}
	\inf_{t\in [0,T)}{f_{\vec{x}}(t)} \leqslant A \leqslant \sup_{t\in [0,T)}{f_{\vec{x}}(t)}.
\end{equation}
Therefore, by introducing function $g_{\vec{x}}(t) = f_{\vec{x}}(t) - A$, condition \eqref{eq:CPdivassump} may be simply rewritten as $g_{\vec{x}}(t) \geqslant 0$ for all $t\in\reals_+$. This however implies $A$ is a lower bound for $f_{\vec{x}}$ and, by \eqref{eq:CPdivassump2}, $A = \inf_{t\in [0,T)}{f_{\vec{x}}(t)}$. This implies
\begin{equation}\label{eq:fxAcondition}
	\frac{1}{T} \int_{0}^{T} g_{\vec{x}}(t) \, dt = \frac{1}{T} \int_{0}^{T} \left[f_{\vec{x}}(t) - A \right] dt = 0.
\end{equation}
By the initial assumptions on regularity of $L_t$, function $f_{\vec{x}}$ is piecewise-continuous and either left- or right-continuous at every discontinuity point. The set of all discontinuity points provides a partition $(\Delta_j)$ of $[0,T)$ of mutually disjoint intervals, $\Delta_j \cap\Delta_{j+1}=\emptyset$, $\bigcup_j \Delta_j = [0,T)$ which are either \emph{open}, \emph{closed} or \emph{half-open} such that every discontinuity point $t_0$ belongs either to $\Delta_j$, or $\Delta_{j+1}$. Then, piecewise-continuity of $f_{\vec{x}}$ allows it to be represented as
\begin{equation}
	f_{\vec{x}} (t) = \sum_{j} \xi_{\vec{x}}^{(j)}(t) \chi_{j} (t)
\end{equation}
where functions $\xi_{\vec{x}}^{(j)}$ are continuous and $\chi_{j}$ stands for the indicator function of interval $\Delta_j$, i.e.~$\chi_j (t) = 1$ iff $t\in\Delta_j$ and 0 otherwise. Then, \eqref{eq:fxAcondition} implies
\begin{equation}
	\sum_j \int_{\Delta_j} \left[ \xi_{\vec{x}}^{(j)}(t) \, dt - A \right] dt = 0,
\end{equation}
which is possible iff all $\int_{\Delta_j} \left[ \xi_{\vec{x}}^{(j)}(t) \, dt - A \right] dt = 0$. Since every function $\xi_{\vec{x}}^{(j)}$ is continuous everywhere inside $\Delta_j$, we have $\xi_{\vec{x}}^{(j)}(t) = A$ for all $t\in\operatorname{Int}{\Delta_j}$. For any discontinuity point $t_0 \in\reals_+$, assume (with no loss of generality) that $t_0$ is a right boundary of some right-closed interval $\Delta_j$; since $\xi_{\vec{x}}^{(j)}$ is assumed to be left-continuous at $t_0$, it must also be that $\xi_{\vec{x}}^{(j)}(t_0)=A$ (analogous reasoning then is true for right-continuous case) and so $f_{\vec{x}}(t) = A$ everywhere, i.e.~$\mathbf{a}_t$ is constant and claim \ref{claim:NotEverywhereCPdivOne} is shown. Finally, for claim \ref{claim:NotEverywhereCPdivThree}, assume $\mathbf{a}_t$ is not constant. Then, $P_t$ is not everywhere CP-divisible via claim \ref{claim:NotEverywhereCPdivOne}, or equivalently, inequality \eqref{eq:CPdivassump} is not satisfied for all $\vec{x}\in\complexes^{d^2 -1}$. Denote now
\begin{equation}
	\mathcal{P}_{\vec{x}} = \{ t\in [0,T) : f_{\vec{x}}(t) \geqslant \frac{1}{T} \int_{0}^{T} f_{\vec{x}}(t) dt \}, \qquad \mathcal{P} = \bigcap_{\vec{x}\in\complexes^{d^2-1}}\mathcal{P}_{\vec{x}}.
\end{equation}
Under such notion, CP-divisibility of $P_t$ is allowed only over subset $\mathcal{P}\subsetneq [0,T)$ and hence, its complement $\mathcal{N} = [0,T) \setminus \mathcal{P}$ is non-empty. By piecewise continuity of $f_{\vec{x}}$, both $\mathcal{P}$ and $\mathcal{N}$ must be unions of intervals in $[0,T)$.
\end{proof}

\section{Exemplary applications}
\label{sec:Motivation}

In this section, we examine two examples of Master Equations governed by \emph{commutative periodic Lindbladian families}. For clarity of presentation, we will limit our analysis to the simplest case of algebra $\matr{2}$, however generalizations to higher dimensional systems are naturally obtainable. In all the following, the orthonormal Frobenius basis in $\matr{2}$ is then $F_j = \frac{\sigma_j}{\sqrt{2}}$, where $\{\sigma_{j}\}_{j=1}^{4}$ are the Pauli matrices. The solutions of differential equations over $\matr{2}$ appearing in this section will always be obtained by the so-called \emph{vectorization} procedure, i.e.~by applying some arbitrarily chosen isomorphism $\matr{2} \mapsto \complexes^{4}$. For simplicity, we choose it as
\begin{equation}
	x \mapsto \vec{x} = (x_1, \, x_2 ,\, x_3 ,\, x_4)^{\mathrm{T}}, \quad x_j = \frac{1}{\sqrt{2}}\,\tr{\sigma_j x},
\end{equation}
i.e.~we map each matrix to a vector of its components in Frobenius basis. Note, that $x_4 = \frac{1}{\sqrt{d}}\tr{x}$. Then, every map $W \in B(\matr{2})$ is then expressed as a matrix
\begin{equation}
	\mathbf{W} = [W_{jk}]_{jk} \in \matr{4}, \quad W_{jk} = \frac{1}{2} \, \tr{\left[\sigma_j W(\sigma_k)\right]}.
\end{equation}
In particular, $W$ is trace preserving iff $W_{d^2 j} = \delta_{d^2 j}$. If a Hermitian basis $\{F_i\}$ is used (which is the case here), then $W$ is a *-map iff $[W_{jk}]_{jk}$ is real. Likewise, we make bijective replacements $\rho_t \mapsto \vec{r}(t)$, $\Lambda_t \mapsto \mathbf{\Lambda}(t) = [\Lambda_{jk}(t)]_{jk}$ and $L_t \mapsto \mathbf{L}(t) = [L_{jk}(t)]_{jk}$, such that the MME transforms into linear ODE of a form
\begin{equation}
\label{eq:VecODE}
	\frac{d\vec{r}(t)}{dt} = \mathbf{L}(t) \vec{r}(t).
\end{equation}

\subsection{Periodically modulated random dynamics}
\label{sec:PeriodicRandomDyn}

As a first simple, yet popular example, we will briefly analyze a random dynamics with additional assumption of time-periodicity of decoherence rates, i.e.~a generalization of pure decoherence model of a qubit, involving all Pauli channels. We take the Master Equation in a following form \cite{Chruscinski2013}
\begin{equation}\label{eq:pureDecohMME}
	\frac{d\rho_t}{dt} = L_t (\rho_t) = \frac{1}{2}\sum_{j=1}^{3} \gamma_j (t) (\sigma_j \rho_t \sigma_j - \rho_t) .
\end{equation}
We assume all functions $\gamma_j (t)$ are \emph{non-negative}, \emph{periodic} and \emph{continuous}. Exploiting a useful property of Pauli matrices $\sigma_{j}^{2} = \sigma_{j}^{\hadj}\sigma_j = I$, \eqref{eq:pureDecohMME} is quickly seen to be of form \eqref{eq:Lindbladian2} for $H_t = 0$ and Kossakowski matrix $\mathbf{a}_t = [\delta_{jk}\gamma_{j}(t)]_{jk}$. In such case, the derived $\Lambda_t$ is a convex combination of Pauli channels. 

Invoking the vectorization procedure mentioned earlier, matrix $\mathbf{L}(t)$ is found to be diagonal in Frobenius basis,
\begin{equation}
	\mathbf{L}(t) = - \, \mathrm{diag}\{\gamma_{2}(t)+\gamma_{3}(t), \, \gamma_{1}(t)+\gamma_{3}(t), \, \gamma_{1}(t)+\gamma_{2}(t), \, 0\}.
\end{equation}
Note, that $L_{44}(t)=0$ which is required for trace preservation. Solution to \eqref{eq:pureDecohMME} is then again given by diagonal matrix $\mathbf{\Lambda}(t)$,
\begin{equation}\label{eq:DecoherenceLambda}
	\mathbf{\Lambda}(t) = \mathrm{diag}\{ e^{-\Gamma_{2,3} (t)} ,\, e^{-\Gamma_{1,3} (t)} ,\, e^{-\Gamma_{1,2} (t)} ,\, 1 \},
\end{equation}
where functions $\Gamma_{j,k}(t) = \Gamma_j (t) + \Gamma_k (t)$ are the antiderivatives,
\begin{equation}\label{eq:GammaDef}
	\Gamma_{j}(t) = \int_{0}^{t} \gamma_{j}(t^\prime) dt^\prime 
\end{equation}
and are all non-negative. Here, again $\Lambda_{44}(t) = 1$ is simply the trace preservation condition. The corresponding Floquet pair $(P_t, e^{tX})$ can then be calculated by finding its matrix counterpart $(\mathbf{P}(t), e^{t\mathbf{X}})$ and transforming back to $B(\matr{2})$. By \eqref{eq:FloquetFormGeneral} and \eqref{eq:XgeneralFormula},
\begin{subequations}
	\begin{equation}\label{eq:XpureDecM4}
		\mathbf{X} = -\frac{1}{T} \, \mathrm{diag}\{\Gamma_{2,3}(T),\,\Gamma_{1,3}(T),\,\Gamma_{1,2}(T),\,0\} ,
	\end{equation}
	\begin{equation}\label{eq:PtMatrixPureDec}
		\mathbf{P}(t) = \mathrm{diag}\{ e^{-\vartheta_{2,3} (t)} ,\, e^{-\vartheta_{1,3} (t)} ,\, e^{\vartheta_{1,2} (t)} ,\, 1 \},
	\end{equation}
\end{subequations}
for functions $\vartheta_{j,k}(t)$ being the shorthand for
\begin{equation}\label{eq:VarthetaDef}
	\vartheta_{j,k} (t) = \Gamma_{j,k}(t) - \frac{\Gamma_{j,k}(T)}{T} t.
\end{equation}
By \eqref{eq:GammaDef}, functions $\Gamma_{j}(t)$ satisfy additivity property $\Gamma_{j}(t+T) = \Gamma_{j}(t) + \Gamma_{j}(T)$ and so $\mathbf{P}(t)$ is periodic. Inverting the vectorization and performing some mild algebra, one recovers original maps over $\matr{2}$,
\begin{subequations}
	\begin{equation}\label{eq:MapsPureDecX}
		X(x) = \left( \begin{array}{cc} -\beta_1 (x_{11}-x_{22}) & \beta_2 x_{21} - \beta_3 x_{12} \\ \beta_2 x_{12} - \beta_3 x_{21} & \beta_1 (x_{11}-x_{22}) \end{array}\right),
	\end{equation}
	\begin{equation}\label{eq:MapsPureDecPt}
		P_t(x) = \left( \begin{array}{cc} \xi_{1}(t)\, x_{11} + \xi_{2}(t) \, x_{22} & \chi_{1}(t)\, x_{12}-\chi_{2}(t)\, x_{21} \\ \chi_{1}(t)\, x_{21} - \chi_{2}(t) \, x_{12} & \xi_{2}(t)\, x_{11} + \xi_{1}(t)\, x_{22} \end{array}\right),
	\end{equation}
\end{subequations}
where the following notation was introduced for brevity,
\begin{align}\label{eq:XiChiDefinitions}
	\beta_{1,2} &= \frac{1}{2T}\left(\Gamma_1 (T) \pm \Gamma_2 (T)\right), \quad \beta_3 = \frac{1}{2T}\left(\beta_1 +2 \Gamma_{3}(T)\right), \\
		\xi_{1,2}(t) &= \frac{1}{2}\left(1\pm e^{-\vartheta_{1}(t)-\vartheta_2 (t)}\right), \\
		\chi_{1,2}(t) &= \frac{1}{2}\left(e^{-\vartheta_{1}(t)-\vartheta_3 (t)}\pm e^{-\vartheta_{2}(t)-\vartheta_3 (t)}\right).
\end{align}
By diagonal structure of \eqref{eq:XpureDecM4}, the eigenbasis of both $X$, $\Lambda_T$ is simply $\varphi_j = F_j$. This gives rise to set of characteristic multipliers
\begin{equation}
	\spec{\Lambda_T} = \{1,\,e^{-\Gamma_{2,3}(T)},\,e^{-\Gamma_{1,3}(T)},\,e^{-\Gamma_{1,2}(T)} \},
\end{equation}
and set $\mathcal{E}_2$ in spectral decomposition \eqref{eq:SpectralDecLambdaT} is empty.
The general solution in this case admits an explicit form \eqref{eq:FloquetGenSol} and can be put as
\begin{equation}\label{eq:PureDecState}
	\rho_t = \frac{1}{\sqrt{2}} \left( P_t (I) + \sum_{\pi \, \text{even}}c_{\pi(1)} e^{-\frac{t}{T}\Gamma_{\pi(2),\pi(3)}(T)} P_t (\sigma_{\pi(1)})\right)
\end{equation}
for even permutations $\pi$ in symmetric group $S_3$. As clearly $\spec{\Lambda_T}\subset\mathbb{D}^1$, all solutions $\rho_j (t), \rho_t$ are stable. Immediately, \eqref{eq:PureDecState} yields a unique periodic limit cycle $\rho_{t}^{\infty}	= \frac{1}{2} I$	being in this case a trivial \emph{limit point} in $\matr{2}$, the \emph{maximally mixed state}. The CP-divisibility of semigroup part $\{e^{tX} : t\in\reals_+\}$ can be shown by checking that the expression \eqref{eq:MapsPureDecX} for map $X$ can be cast into
\begin{equation}\label{eq:RandomDynXmapPauli}
	X(x) = \sum_{j=1}^{3} \frac{\Gamma_j (T)}{T} \left( \sigma_j x \sigma_j - \frac{1}{2}\acomm{\sigma_j \sigma_j}{x} \right)
\end{equation}
which, since $\Gamma_j (t) \geqslant 0$, is of standard form; therefore $\{e^{tX}\}\subset\cptp{\matr{2}}$ and is a CP-divisible contraction semigroup.

Finally, we verify whether equations \eqref{eq:PtCPdivcondition} and \eqref{eq:PtCPsuffCond} of theorem \ref{prop:etXcpDiv} actually correspond to CP-divisibility and complete positivity of $P_t$. This is achieved by finding exact algebraic conditions, which guarantee complete positivity of either $P_t$, or its corresponding propagator $\mathcal{V}_{t,s} = P_t P_{s}^{-1}$, i.e.~by construction and analysis of their Choi matrices. The results, explicitly presented in \ref{sec:propPureDecproof}, show that $\mathcal{V}_{t,s}\in\cptp{\matr{2}}$ for all $t\geqslant s$ in some interval $\mathcal{I}\subset\reals_+$, if and only if
\begin{equation}\label{eq:M2pureDecPtCPdiv}
	\gamma_{j}(t) - \frac{\Gamma_{j}(T)}{T} \geqslant 0
\end{equation}
for all $t\in\mathcal{I}$ and $j \in \{ 1, \, 2,\, 3 \}$, and $P_t \in\cp{\matr{2}}$ for given $t\in\reals_+$ if 
\begin{equation}\label{eq:M2pureDecPtCP}
	\Gamma_{j}(t) - \frac{\Gamma_{j}(T)}{T}t \geqslant 0
\end{equation}
for all $j \in \{ 1, \, 2,\, 3 \}$. These two conditions are then equivalent to claims \ref{claim:PtCPdivcondition} and \ref{claim:PtCPcondition} of theorem \ref{prop:etXcpDiv}.

\subsection{Periodically driven two-level system}
\label{sec:PeriodicTLS}

The second example concerns a \emph{two-level system with periodically modulated Hamiltonian}, coupled to external reservoir via standard ladder operators constructed from Pauli matrices. We utilize the MME in usual standard form \cite{Szczygielski2013}, however with time-dependent Hamiltonian part,
\begin{equation}\label{eq:TLSMME}
	\frac{d\rho_t}{dt} = L_t(\rho_t) = -\frac{i\omega (t)}{2} \comm{\sigma_3}{\rho_t} + \gamma_{\uparrow} D_{\sigma_+} (\rho_t) + \gamma_{\downarrow} D_{\sigma_-}(\rho_t),
\end{equation}
where $D_A$ is defined as $D_A(\rho) = A\rho A^\hadj - \frac{1}{2}\acomm{A^\hadj A}{\rho}$, matrices $\sigma_\pm = \frac{1}{2}(\sigma_1 \pm i\sigma_2)$ are the usual \emph{ladder operators} and $\gamma_{\uparrow}$, $\gamma_{\downarrow} > 0$ stand for pumping and dumping transition rates, respectively. System's self Hamiltonian is $H_t = \frac{1}{2}\omega (t) \sigma_3$ and is diagonal in eigenvectors $e_0 = (0,\,1)$, $e_1 = (1,\,0)$. These eigenvectors denote the \emph{ground} and \emph{exited state}, repectively. Real function $\omega(t)$ is the \emph{energy difference} between states $e_1$ and $e_0$, \emph{periodically modulated} by some external quasi-classical source, $\omega(t) = \omega(t+T)$.

The corresponding Kossakowski matrix of Lindbladian $L_t$ in \eqref{eq:TLSMME} is
\begin{equation}
	\mathbf{a} = \frac{1}{2} \left( \begin{array}{ccc} \gamma_\downarrow+\gamma_\uparrow & i (\gamma_\downarrow-\gamma_\uparrow) & 0 \\ -i (\gamma_\downarrow-\gamma_\uparrow) & \gamma_\downarrow+\gamma_\uparrow & 0 \\ 0 & 0 & 0 \end{array}\right) \geqslant 0,
\end{equation}
which is constant. We next obtain solution in a form of Floquet pair by utilizing the same vectorization procedure as in previous example (we omit calculations for brevity, as the whole procedure is similar),
\begin{subequations}\label{eq:TLSPtX}
	\begin{equation}
		P_t (x) = \left( \begin{array}{cc} x_{11} & e^{-i\varpi(t)} e^{\frac{i\varpi(T)}{T}t} x_{12} \\ e^{i\varpi(t)} e^{-\frac{i\varpi(T)}{T}t} x_{21} & x_{22} \end{array} \right),
	\end{equation}
	\begin{equation}
		X(x) = \left( \begin{array}{cc} -\gamma_\downarrow x_{11} + \gamma_\uparrow x_{2,2} & \left(-\frac{\gamma_\downarrow+\gamma_\uparrow}{2} - i \frac{\varpi(T)}{T}\right)  x_{12} \\ \left(-\frac{\gamma_\downarrow+\gamma_\uparrow}{2} + i \frac{\varpi(T)}{T}\right) x_{21} & \gamma_\downarrow x_{11} - \gamma_\uparrow x_{22} \end{array} \right),
	\end{equation}
\end{subequations}
for antiderivative $\varpi(t) = \int_{0}^{t} \omega(t^\prime) dt^\prime$. With some effort, $X$ can be then put in \emph{standard form}
\begin{equation}
	X = -\frac{i \varpi(T)}{2T} \comm{\sigma_3}{\,\cdot\,\,} + \gamma_\downarrow D_{\sigma_-} + \gamma_\uparrow D_{\sigma_+},
\end{equation}
i.e.~$\{e^{tX} : t\in\reals_+\}$ is CP-divisible. Since $P_t$ does not alter diagonal elements of density matrix and $P_t (x)$ is Hermitian for Hermitian $x$, it is a *-map.

One finds the spectrum of Choi matrix of $P_t$ to be $\{0,2\}$ ($k_0 = 3$), so $P_t\in\cptp{\matr{2}}$ for all $t\in\reals_+$. Curiously, Choi matrix of its propagator, $\mathcal{V}_{t,s} = P_t P_{s}^{-1}$, $t\geqslant s$, yields the same spectrum regardless of $t,s$ so map $P_t$ is CP-divisible \emph{globally}, i.e.~in whole $\reals_+$. This is then confirmed by theorem \ref{prop:etXcpDiv}, since, as $\mathbf{a}$ is \emph{constant}, inequalities \eqref{eq:PtCPdivcondition} and \eqref{eq:PtCPsuffCond} are always satisfied. We remark that this observation remains consistent with theorem \ref{prop:NotEverywhereCPdiv} as global Markovianity of $P_t$ was allowed only if Kossakowski matrix was constant a.e.

Eigendecomposition of matrix counterpart of map $X$ allows also to find $\spec{X}$ and $\spec{\Lambda_T}$, i.e.~sets of \emph{characteristic exponents} and \emph{multipliers},
\begin{subequations}
	\begin{equation}
		\spec{X} = \{ \mu_1 = 0,\, \mu_2 = -\gamma_\downarrow-\gamma_\uparrow,\,\mu_{3,4} = -\frac{1}{2}(\gamma_\downarrow+\gamma_\uparrow)\pm i\frac{\varpi(T)}{T}\},
	\end{equation}
	\begin{equation}
		\spec{\Lambda_T} = \{ \lambda_1 = 1,\, \lambda_2 = e^{-T(\gamma_\downarrow+\gamma_\uparrow )},\, \lambda_{3,4} = e^{-\frac{T}{2}(\gamma_\downarrow+\gamma_\uparrow)}e^{\pm i\varpi(T)} \},
	\end{equation}
\end{subequations}
along with eigenvectors (put in corresponding order)
\begin{equation}
	\varphi_1 = \frac{\sqrt{2}}{\gamma_\downarrow+\gamma_\uparrow} \, \mathrm{diag}\{\gamma_\uparrow ,\, \gamma_\downarrow\}, \quad \varphi_2 = \frac{1}{\sqrt{2}}\sigma_3, \quad \varphi_{3,4} = \pm i\sqrt{2} \, \sigma_{\mp}. 
\end{equation}
Hence, subset $\mathcal{E}_2$ of $\spec{X}$ is again empty.
We emphasize here, that the eigenbasis $\{\varphi_j\}$ is \emph{not} orthogonal (w.r.t. Frobenius inner product) since $\mathbf{X}$ is not normal. Again, $\spec{\Lambda_T}\subset\mathbb{D}^1$ and is closed under complex conjugation. All eigenvectors apart from $\varphi_1$, i.e.~those spanning eigenspaces $E_{\Lambda_T} (\lambda)$ for $\lambda\neq 1$, are then traceless and non-positive semi-definite, as proposition \ref{prop:VarphiProperties} states. Two real multipliers $\lambda_{1,2}$ are simple eigenvalues and so $\varphi_{1,2}$ are Hermitian; naturally, $\varphi_1 \geqslant 0$ and $\varphi_3 = \varphi_{4}^{\hadj}$, as $\lambda_3 = \overline{\lambda_4}$.

An actual solution is then obtained with formulas \eqref{eq:FloquetGenSol} and \eqref{eq:RhoPeriodicSteady},
\begin{align}
	\rho_t &= c_1 \varphi_1 + c_2 e^{-(\gamma_\downarrow+\gamma_\uparrow)t} \varphi_2 \\
	&+ e^{-\frac{t}{2}(\gamma_\downarrow+\gamma_\uparrow)}\left(c_3 e^{\frac{i\varpi(T)}{T}t} \phi_3 (t) + c_4 e^{\frac{-i\varpi(T)}{T}t} \phi_4 (t) \right),\nonumber
\end{align}
where Floquet states $\phi_{3,4}(t) = P_t (\varphi_{3,4})$ are explicitly defined as
\begin{equation}\label{eq:TLSsolution}
	\phi_3 (t) = -i\sqrt{2} \, e^{-i\varpi(t)}e^{\frac{i\varpi(T)}{T}t} \sigma_+, \quad \phi_4 (t) = \phi_3 (t)^\hadj = i\sqrt{2} \, e^{i\varpi(t)}e^{-\frac{i\varpi(T)}{T}t} \sigma_- 
\end{equation}
and coefficients $c_j$ are found to be
\begin{equation}
	c_1 = \frac{1}{\sqrt{2}}, \quad c_2 = \frac{\gamma_\uparrow \sqrt{2}}{\gamma_\downarrow+\gamma_\uparrow}-\sqrt{2} \, \rho_{11}(0), \quad c_3 = \frac{i}{\sqrt{2}}\rho_{12}(0) = \overline{c_4},
\end{equation}
where trace normalization and Hermiticity of $\rho_0$ were implicitly used. Clearly, solution \eqref{eq:TLSsolution} is stable and the asymptotic periodic orbit $\rho_{t}^{\infty}$ in this case is, similarly to previous example, also a single \emph{limit point}, $\rho_{t}^{\infty} = \frac{1}{\sqrt{2}} \varphi_1$.

\section{Note on the non-commutative case}
\label{sec:Counterexample}

Theorems \ref{prop:etXcpDiv} and \ref{prop:NotEverywhereCPdiv} allow to characterize CP-divisibility properties of Floquet normal form in commutative case. It is then natural to ask whether these results possibly could be extended onto general class of \emph{non-commutative} Lindbladians, i.e.~time-dependent maps $L_t$ not subject to condition \eqref{eq:CommCondition}. This is answered negatively in this section by brief examination of simple, numerical counterexample in algebra $\matr{3}$. Namely, we consider a $2\pi$-periodic Lindbladian of general standard form \eqref{eq:Lindbladian2} for $\{F_i\}$ being a Frobenius orthonormal basis of $\matr{3}$ (see \ref{sec:M3basis} for details) and of the Kossakowski matrix $\mathbf{a}_t = [a_{ij}(t)]$, $i,j \in \{1,\, ... ,\, 9\}$ given by equalities
\begin{align}
	a_{22}(t) = a_{77}(t) = 1, \quad a_{55}(t) = 2, \quad a_{88}(t) = 1+\cos{t}, \\
	a_{52}(t) = -i \cos{t}, \quad a_{72}(t) = -i, \quad a_{75}(t) = \cos{t}, \nonumber \\
	a_{25}(t) = \overline{a_{52}(t)}, \quad a_{27}(t) = \overline{a_{72}(t)}, \quad a_{57}(t) = a_{75}(t), \nonumber
\end{align}
with all remaining $a_{ij}(t) = 0$; also, we put $H_t = 0$ for simplicity. By direct check, $\mathbf{a}_t$ is then positive semi-definite. The solution of Master Equation is found by applying again the vectorization scheme and solving a resulting matrix ODE of a form $\frac{d}{dt}\mathbf{\Lambda}_t = \mathbf{L}_t\mathbf{\Lambda}_t$, $\mathbf{\Lambda}_0 = I$ numerically for time-dependent matrix $\mathbf{\Lambda}_t$ (for clarity, we do not include explicit numerical results in the paper). Map $X$, and the semigroup part $e^{tX}$ in consequence, are then found as in \eqref{eq:XgeneralFormula} by calculating a proper matrix logarithm of monodromy matrix $\mathbf{\Lambda}_{2\pi}$ and reverting the vectorization; likewise, the periodic part $P_t$ of Floquet pair is revealed by computing $\Lambda_t e^{-tX}$. An interesting result then is observed: while both maps $P_t$, $e^{tX}$ are trace and hermiticity preserving, none of them is actually completely positive anywhere in $\reals_+ \setminus 2\pi\integers$, nor CP-divisible (their composition $\Lambda_t$ remains globally completely positive and CP-divisible as a quantum dynamics). We show this explicitly by plotting the time evolution of their spectra in figure \ref{fig:Spectra}.
\begin{figure}[htbp]
	\centering
		\includegraphics[width=1.00\textwidth]{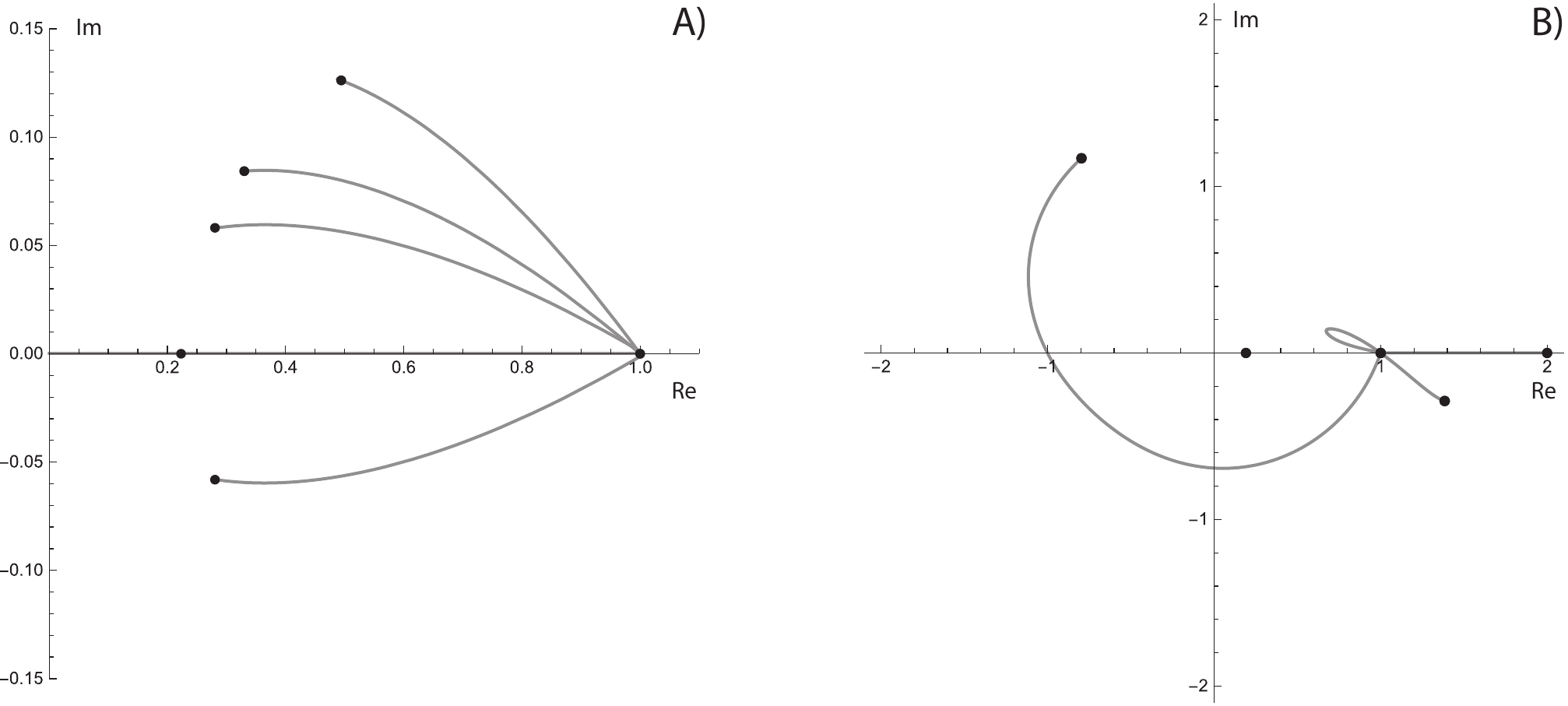}
	\caption{Complex plane plots of eigenvalues (dots) and their trajectories (curves) of maps $e^{tX}$ (image A, spectrum at time $t=0.5$, trajectories drawn for $0\leqslant t \leqslant 0.5$) and $P_t$ (image B, spectrum at $t=4.32$, trajectories for $0\leqslant t \leqslant 4.32$).}
	\label{fig:Spectra}
\end{figure}
Since clearly both spectra are not invariant w.r.t.~complex conjugation, none of the two maps of Floquet pair are completely positive. This fact is further confirmed by checking semi-definiteness of corresponding Choi matrices. Therefore, it is evident that Theorems \ref{prop:etXcpDiv}, \ref{prop:NotEverywhereCPdiv} do not admit a direct application in non-commutative setting as even the semigroup part of the solution may fail to be completely positive. The same can be then stated on CP-divisibility of $P_t$, since its propagator $P_t P_{s}^{-1}$, $s\in [0,t]$, is not completely positive either.

\section{Conclusions}

We presented an insight into general applicability of Floquet theory in description of Markovian Master Equations given by periodic, finite-dimensional Lindbladians in standard form. The performed analysis allowed for formulating some remarks on Floquet normal form of the induced quantum dynamical maps, partially in general case, and especially in simplified case of commutative Lindbladian families. In particular, it was shown that in generic case of periodic $L_t$, it is impossible for both maps of the Floquet pair to be globally simultaneously Markovian in commutative case. It was also shown that the traditional results of Floquet theory, like analysis of stability based on characteristic multipliers of the system, still possesses an excellent application in case of completely positive dynamics. Two examples of possible non-trivial physical applicability of such Floquet-Lindblad theory were also briefly examined. However, the general case of \emph{non-commutative} Lindbladian families remains an open problem requiring more involved study, since, interestingly, no \emph{global} Markovianity, nor even complete positivity of the Floquet normal form is guaranteed once the commutativity condition is abandoned.

\section*{Acknowledgments}
Author expresses his thanks to prof.~Robert Alicki for valuable comments and stimulating discussions, as well as to anonymous Reviewer for constructive suggestions, which led to substantial improvement of a preliminary version of the manuscript. Support by the National Science Centre, Poland, via grant No. 2016/23/D/ST1/02043 is greatly acknowledged.

\appendix

\section{Mathematical supplement}

\label{sec:SecondaryResults}

\begin{lemma}\label{lemma:EveryLinMap}The following hold for every linear *-map $T$ on $\matrd$:
	\begin{enumerate*}[label={\alph*)}]
		\item\label{point:EveryLinMapA} $T$ admits a unique Hermitian matrix $[t_{jk}]\in\matr{d^2}$, such that $T(x) = \sum_{j,k=1}^{d^2} t_{jk} F_j x F_{k}^{\hadj}$ for every $x\in\matrd$; 
		\item\label{point:EveryLinMapB} $T$ is completely positive iff $[t_{jk}]\geqslant 0$;
		\item\label{point:EveryLinMapC} if $T$ is trace preserving, then there exist Hermitian matrices $G,K\in\matrd$ such that\end{enumerate*}
		\begin{equation}\label{eq:HpTpMapForm}
			T(x) = x + i \comm{G}{x} -\acomm{K}{x} + \sum_{j,k=1}^{d^2-1} t_{jk} F_j x F_{k}^{\hadj}.
		\end{equation}
\end{lemma}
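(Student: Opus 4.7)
The plan is to prove the three claims in order, using the Choi formalism for parts \ref{point:EveryLinMapA} and \ref{point:EveryLinMapB}, and careful bookkeeping around $F_{d^2} = I/\sqrt{d}$ for part \ref{point:EveryLinMapC}. For \ref{point:EveryLinMapA}, I would first show that the $d^4$ elementary maps $x \mapsto F_j x F_k^\hadj$ span $B(\matrd)$. Under the standard vectorization $\ve : \matrd \to \complexes^{d^2}$, each such map corresponds to $\overline{F_k}\otimes F_j \in \matr{d^2}$, and since $\{F_j\}$ is a basis of $\matrd$, these tensor products form a basis of $\matr{d^2}$; a dimension count then delivers existence and uniqueness of $[t_{jk}]$. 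Hermiticity of $[t_{jk}]$ follows by imposing the *-condition $T(x)^\hadj = T(x^\hadj)$ on the representation, taking Hermitian adjoints inside the sum, relabelling $j\leftrightarrow k$, and invoking uniqueness to conclude $t_{jk} = \overline{t_{kj}}$.

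For \ref{point:EveryLinMapB}, I would identify $[t_{jk}]$ with the matrix of the Choi operator $C_T = (\id \otimes T)(\omega \omega^\hadj)$, where $\omega = \sum_i e_i \otimes e_i$. Substituting the representation from part \ref{point:EveryLinMapA} yields $C_T = \sum_{j,k} t_{jk}\, \ve(F_j)\,\ve(F_k)^\hadj$; since vectorization is an isometry from $(\matrd, \fprod{\cdot}{\cdot})$ onto $\complexes^{d^2}$, the vectors $\{\ve(F_j)\}$ form an orthonormal basis, and hence $C_T$ and $[t_{jk}]$ share the same spectrum. The equivalence of complete positivity of $T$ with $[t_{jk}] \geqslant 0$ then follows from Choi's theorem.

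For \ref{point:EveryLinMapC}, I would split every sum in the representation by isolating the index value $d^2$. Using $F_{d^2} = I/\sqrt{d}$, the terms with one or two indices equal to $d^2$ collapse to $\frac{t_{d^2 d^2}}{d}\, x + \frac{1}{\sqrt{d}}(A x + x A^\hadj)$, where $A = \sum_{j=1}^{d^2-1} t_{j,d^2} F_j$, while the truncated double sum already matches the dissipative term in \eqref{eq:HpTpMapForm}. Writing $A = C + iD$ with $C,D$ Hermitian converts $A x + x A^\hadj$ into $\{C,x\}+i[D,x]$, and setting $G := D/\sqrt{d}$ recovers the expression for $G$ stated just before the lemma. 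With $K := \frac{1}{2}\sum_{j,k=1}^{d^2-1} t_{jk}\, F_k^\hadj F_j$ (Hermitian by Hermiticity of $[t_{jk}]$), it remains to verify the identity $K + C/\sqrt{d} = \frac{1}{2}\bigl(1-t_{d^2 d^2}/d\bigr)\, I$, which collapses the residual $x$-coefficient and $\{C/\sqrt{d},x\}$ into $x - \{K,x\}$ and delivers \eqref{eq:HpTpMapForm}; this identity is precisely the trace-preservation constraint $\sum_{j,k} t_{jk}\, F_k^\hadj F_j = I$ after extracting the $d^2$-indexed contributions. I expect the main obstacle to be exactly this bookkeeping: tracking which of the four index regimes $(d^2,d^2)$, $(j,d^2)$, $(d^2,k)$ and $(j,k<d^2)$ produces a scalar multiple of $x$, a one-sided multiplication, or a dissipative term, and checking that the Hermitian/anti-Hermitian decomposition of $A$ combined with trace preservation assembles precisely the commutator--anticommutator structure of \eqref{eq:HpTpMapForm}.
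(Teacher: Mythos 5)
Your proposal is correct, and the key identities you state all check out: the four index regimes collapse exactly as you describe, your $G=D/\sqrt{d}$ reproduces the matrix $G_t$ of \eqref{eq:Gt}, and the identity $K+C/\sqrt{d}=\tfrac{1}{2}\bigl(1-t_{d^2d^2}/d\bigr)I$ is precisely the trace-preservation constraint $\sum_{j,k=1}^{d^2}t_{jk}F_k^{\hadj}F_j=I$ with the $d^2$-indexed contributions extracted. The route differs from the paper's in parts a) and b): the paper invokes the structure theorems of de Pillis, Jamio\l kowski, Choi and Hill to write $T(x)=\sum_i\alpha_i X_i x X_i^{\hadj}$ with $\alpha_i\in\reals$ (and $\alpha_i\geqslant 0$ iff $T$ is completely positive), expands each $X_i$ in the Frobenius basis and collects coefficients into $t_{jk}=\sum_i\alpha_i x_{i,j}\overline{x_{i,k}}$, from which Hermiticity and, when all $\alpha_i\geqslant 0$, positivity of $[t_{jk}]$ are read off. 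You instead establish existence and uniqueness of the representation for an \emph{arbitrary} linear map from scratch --- the $d^4$ elementary maps $x\mapsto F_jxF_k^{\hadj}$ vectorize to the basis $\{\overline{F_k}\otimes F_j\}$ of $\matr{d^2}$ --- and then obtain Hermiticity of $[t_{jk}]$ from the *-condition by relabelling, and claim b) from Choi's theorem after identifying $[t_{jk}]$ as the matrix of the Choi operator in the orthonormal frame $\{\ve(F_j)\}$. Your version is more self-contained (it uses only Choi's theorem rather than the real-coefficient Hermiticity-preserving decomposition as an input) and makes the uniqueness asserted in claim a) explicit, which the paper leaves implicit in the phrase ``examining properties of $[t_{jk}]$''; the paper's version is shorter because the cited theorems carry most of the weight. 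Part c) is the same computation in both: the paper groups the $d^2$-indexed terms into $Ex+xE^{\hadj}$ with $E=\tfrac{1}{2d}t_{d^2d^2}I+\tfrac{1}{\sqrt{d}}\sum_{j=1}^{d^2-1}t_{jd^2}F_j$ and Cartesian-decomposes $E=M+iN$, which is exactly your $A$-bookkeeping with the scalar term absorbed into $E$ (note, incidentally, that trace preservation forces $M=\tfrac{1}{2}I-M'$, consistent with the minus sign in $-\acomm{K}{x}$).
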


\begin{proof}
Structure theorems by de Pillis \cite{Pillis1967}, Jamio\l kowski \cite{Jamiolkowski1972}, Choi \cite{Choi_1975} and Hill \cite{Hill1973,Poluikis1981} allow to represent any *-map $T$ in a form $T(x) = \sum_{i} \alpha_i X_i x X_{i}^{\hadj}$, where $X_i \in \matrd$ and $\alpha_i \in \reals$ ($\alpha_i \geqslant 0$ iff $T$ is completely positive). It suffices to expand $X_i = \sum_j x_{i,j} F_j$ in Frobenius basis and collect expansion coefficients in form of new matrix, $t_{jk} = \sum_{i} x_{i,j} \overline{x_{i,k}}$. Claims \ref{point:EveryLinMapA} and \ref{point:EveryLinMapB} then follow by examining properties of $[t_{jk}]$. For \ref{point:EveryLinMapC}, splitting sums in general decomposition of $T$ allows one to write
\begin{equation}
	T(x) = Ex + xE^\hadj + \Psi(x),
\end{equation}
for $\Psi (x) = \sum_{j,k=1}^{d^2-1} t_{jk} F_j x F_{k}^{\hadj}$ and $E = \frac{1}{2d} t_{d^2 d^2} \cdot I + \frac{1}{\sqrt{d}} \sum_{j=1}^{d^2-1} t_{jd^2} F_j$, where we employed hermiticity of $[t_{jk}]$. $E$ admits a unique Cartesian decomposition $E = M + iN$, where $M = \frac{1}{2} (E+E^\hadj)$ and $N = \frac{1}{2i} (E-E^\hadj)$ are both Hermitian; therefore
\begin{equation}\label{eq:T2}
	T(x) = i\comm{N}{x} + Mx + xM + \Psi(x).
\end{equation}
Trace preservation condition imposed on \eqref{eq:T2} and cyclicity of trace imply $M = 2^{-1} I + M^\prime$ for $M^\prime = 2^{-1}\sum_{j,k=1}^{d^2-1} t_{jk} F_{k}^{\hadj} F_j$; after substituting back to \eqref{eq:T2} and identifying $G = N$ and $K = M^\prime$, it yields formula \eqref{eq:HpTpMapForm}.
\end{proof}

\subsection{Properties of map \texorpdfstring{$P_t$}{P_t} in random dynamics example}
\label{sec:propPureDecproof}

Here we provide justification for conditions \eqref{eq:M2pureDecPtCPdiv} and \eqref{eq:M2pureDecPtCP}, which are sufficient and necessary for complete positivity and Markovianity of map $P_t$ \eqref{eq:MapsPureDecPt}. Proof will rely on determining geometrical conditions for positivity of certain Choi matrices, however with crucial help from infinite divisibility assumption of Markovian dynamics. For the following result, let us define a vector-valued function $\vec{\vartheta} : \reals_+ \to \reals^3$,
\begin{equation}
	\vec{\vartheta}(t) = (\vartheta_1 (t),\vartheta_2 (t),\vartheta_3 (t)), \quad \vartheta_j (t) = \gamma_j (t) - \frac{\Gamma_{j}(T)}{T}.
\end{equation}

\begin{proposition}\label{prop:PureDec}
Map $P_t$ \eqref{eq:MapsPureDecPt} yielded by equation \eqref{eq:pureDecohMME} over $\matr{2}$ satisfies:
\begin{enumerate}
	\item\label{point:PureDecFour} $P_t \in \cptp{\matr{2}}$ iff $\vec{\vartheta} (t) \in \mathcal{A} = \bigcup_{j=1}^{3}\mathcal{A}_j$, where $\mathcal{A}_j \subset \reals^3$ are unbounded regions,
	\begin{subequations}\label{eq:PureDecThree}
		\begin{equation}\label{eq:regionA1}
			\mathcal{A}_1 = \{(x,y,z) : x,y \geqslant 0, \, z \geqslant \ln{\frac{\cosh{\frac{1}{2}(x-y)}}{\cosh{\frac{1}{2}(x+y)}}}\},
		\end{equation}
		\begin{equation}
			\mathcal{A}_2 = \{(x,y,z) : x > x+y > 0, \, z \geqslant \ln{\frac{\sinh{\frac{1}{2}(x-y)}}{\sinh{\frac{1}{2}(x+y)}}}\},
		\end{equation}
		\begin{equation}
			\mathcal{A}_3 = \{(x,y,z) : y > x+y > 0, \, z \geqslant \ln{\frac{\sinh{\frac{1}{2}(y-x)}}{\sinh{\frac{1}{2}(x+y)}}}\};
		\end{equation}
	\end{subequations}
	\item\label{point:PureDecThree} $P_t$ is CP-divisible in some interval $\mathcal{I}\subset [0,T)$ iff $\vec{\vartheta}(t)-\vec{\vartheta}(s) \in \reals_{+}^{3}$ for all $t,s \in \mathcal{I}$, $t \geqslant s$, which is the case iff $\gamma_{j}(t) - \frac{\Gamma_{j}(T)}{T} \geqslant 0$ for $j \in \{ 1, \, 2,\, 3 \}$ and for all $t\in \mathcal{I}$.
\end{enumerate}
\end{proposition}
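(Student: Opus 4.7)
The plan is to reduce both claims to explicit positive semi-definiteness conditions on Choi matrices. Since the Pauli form of \eqref{eq:pureDecohMME} makes $P_t$ preserve the direct sum $\matr{2} = \linspan{E_{11},E_{22}} \oplus \linspan{E_{12},E_{21}}$, where $E_{ij}$ denote the canonical matrix units, the Choi matrix $C_{P_t} = \sum_{i,j=1}^{2} E_{ij} \otimes P_t(E_{ij})$ splits --- after a suitable basis reordering --- into two $2 \times 2$ blocks whose eigenvalues are $\xi_1(t) \pm \chi_1(t)$ and $\xi_2(t) \pm \chi_2(t)$. The combination $\xi_1 + \chi_1$ is manifestly non-negative; the remaining three, once $\xi_{1,2}$ and $\chi_{1,2}$ are expanded via \eqref{eq:XiChiDefinitions}, produce the three inequalities
\begin{equation}
	1 + e^{-\vartheta_j(t) - \vartheta_k(t)} \geqslant e^{-\vartheta_j(t) - \vartheta_l(t)} + e^{-\vartheta_k(t) - \vartheta_l(t)},
\end{equation}
as $(j,k,l)$ ranges over cyclic permutations of $(1,2,3)$. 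The same analysis applied to the propagator $\mathcal{V}_{t,s} = P_t P_{s}^{-1}$ gives analogous inequalities with $\vartheta_j(t)$ replaced by $\Delta\vartheta_j(t,s) := \vartheta_j(t) - \vartheta_j(s)$.

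For claim \ref{point:PureDecFour}, I would isolate $\vartheta_3$ in each of the three inequalities. Multiplication by $e^{(\vartheta_1 + \vartheta_2)/2}$ transforms the first one directly into $\vartheta_3 \geqslant \ln[\cosh(\tfrac{1}{2}(\vartheta_1 - \vartheta_2))/\cosh(\tfrac{1}{2}(\vartheta_1 + \vartheta_2))]$. Isolating $\vartheta_3$ in the other two, however, requires dividing by the sign-indefinite quantities $e^{-\vartheta_1} - e^{-\vartheta_2}$ and $e^{-\vartheta_1-\vartheta_2}-1$, which is why their reformulations feature hyperbolic sines and are only meaningful when these expressions have definite signs. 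A case analysis on $(\vartheta_1,\vartheta_2)$ then yields the three regions: when $\vartheta_1, \vartheta_2 \geqslant 0$ the cosh inequality is the most restrictive and automatically implies the other two, producing $\mathcal{A}_1$; when $\vartheta_1 > 0 > \vartheta_2$ with $\vartheta_1 + \vartheta_2 > 0$, the sinh form of the second inequality becomes binding and produces $\mathcal{A}_2$ (with $\mathcal{A}_3$ arising symmetrically). Finally, I would verify that whenever $\vartheta_1 + \vartheta_2 \leqslant 0$ (in particular when both $\vartheta_j$ are non-positive) no real $\vartheta_3$ satisfies all three inequalities simultaneously, showing $\mathcal{A}_1 \cup \mathcal{A}_2 \cup \mathcal{A}_3$ is exhaustive.

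For claim \ref{point:PureDecThree}, I would substitute $u = e^{-\Delta\vartheta_1}$, $v = e^{-\Delta\vartheta_2}$, $w = e^{-\Delta\vartheta_3}$ into the three CP-inequalities for $\mathcal{V}_{t,s}$, rewriting them as $u(v+w) \leqslant 1 + vw$ and its cyclic permutations. The elementary identity $1 + vw - (v+w) = (1-v)(1-w) \geqslant 0$ for $v,w \in (0,1]$ shows that $u,v,w \leqslant 1$, equivalently $\Delta\vartheta_j \geqslant 0$ for all $j$, implies all three inequalities --- giving sufficiency. For necessity, note that $\Delta\vartheta_j(t,s) = \int_s^t [\gamma_j(\tau) - \Gamma_j(T)/T]\, d\tau$, so by the fundamental theorem of calculus the pointwise inequality $\gamma_j(t) \geqslant \Gamma_j(T)/T$ on $\mathcal{I}$ is equivalent to the requirement $\Delta\vartheta_j(t,s) \geqslant 0$ for every $t \geqslant s$ in $\mathcal{I}$. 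Failure of this pointwise bound at any point of $\mathcal{I}$ can be seen to spoil CP-divisibility by differentiating the Choi-positivity conditions at $s \to t$, or equivalently by reading the infinitesimal generator $L_t - X = \sum_{j=1}^{3}[\gamma_j(t) - \Gamma_j(T)/T] D_{\sigma_j}$ and appealing to the infinitesimal CP-divisibility criterion employed in the proof of theorem \ref{prop:etXcpDiv}.

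The principal obstacle is the case split in claim \ref{point:PureDecFour}: identifying which of the three hyperbolic inequalities is binding in each region $\mathcal{A}_j$, verifying that the remaining two are automatically implied under the stated sign constraints on $(\vartheta_1, \vartheta_2)$, and confirming exhaustiveness on the complement. The cosh-versus-sinh dichotomy arises precisely from whether $\vartheta_1 \pm \vartheta_2$ retains a definite sign in the region considered, and careful attention to the boundary strata is required to pin down the unbounded sets $\mathcal{A}_j$ exactly as stated.
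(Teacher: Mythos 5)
Your proposal is correct. For claim \ref{point:PureDecFour} it follows essentially the paper's route: both compute the Choi matrix of $P_t$, read off the eigenvalues $\xi_1\pm\chi_1$, $\xi_2\pm\chi_2$, discard the manifestly non-negative one, and resolve the remaining three inequalities by a case split on the signs of $\vartheta_1$, $\vartheta_2$ and $\vartheta_1+\vartheta_2$ (the paper works in the variables $\alpha_j=e^{-\vartheta_j}$, obtaining regions $\mathcal{B}_j$ and pulling them back to $\mathcal{A}_j$; you manipulate the hyperbolic forms directly --- the same computation). One small care point: your exhaustiveness claim that no $\vartheta_3$ exists when $\vartheta_1+\vartheta_2\leqslant 0$ needs the exception $\vartheta_1=\vartheta_2=0$, which does lie in $\mathcal{A}_1$.

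Where you genuinely diverge is claim \ref{point:PureDecThree}. The paper proves necessity geometrically: it regards $t\mapsto\vec{\vartheta}(t)-\vec{\vartheta}(s)$ as a curve issuing from the origin, shows by an explicit intersection-with-planes computation that each boundary surface $\mathcal{F}_j$ of $\mathcal{A}$ is tangent at $\vec{0}$ to the corresponding coordinate plane, and concludes that a velocity with a negative component pushes the curve out of $\mathcal{A}$ itself, not merely out of $\reals_{+}^{3}$. Your first suggested route --- differentiating the Choi eigenvalues of $\mathcal{V}_{t,s}$ at $t=s$, where three of them vanish and their $t$-derivatives are exactly the components $\gamma_j(s)-\Gamma_j(T)/T$ --- reaches the same conclusion more economically and automatically disposes of the very issue ($\mathcal{A}\supsetneq\reals_{+}^{3}$) that the paper's tangency analysis is built to handle. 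Likewise your sufficiency argument via $1+vw-(v+w)=(1-v)(1-w)\geqslant 0$ makes explicit what the paper leaves implicit, namely $\reals_{+}^{3}\subset\mathcal{A}_1$. Your alternative route for necessity --- reading off the generator $L_t-X$ and invoking the infinitesimal CP-divisibility criterion from the proof of theorem \ref{prop:etXcpDiv} --- is logically sound but undercuts the stated purpose of this appendix proposition, which is to confirm conditions \eqref{eq:M2pureDecPtCPdiv} and \eqref{eq:M2pureDecPtCP} by an independent Choi-matrix computation; prefer the eigenvalue-derivative version.
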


\begin{proof}
For claim \ref{point:PureDecFour}, calculate the Choi matrix of $P_t$,
\begin{equation}\label{eq:PtDecohChoi}
	\mathcal{C}[P_t] = \left( \begin{array}{cccc} \xi_1(t) & 0 & 0 & \chi_1 (t) \\ 0 & \xi_2 (t) & -\chi_2 (t) & 0 \\ 0 & -\chi_2(t) & \xi_2(t) & 0 \\ \chi_1 (t) & 0 & 0 & \xi_1 (t) \end{array}\right),
\end{equation}
as well as its spectrum,
\begin{equation}
	\spec{\mathcal{C}[P_t]} = \{ \xi_1 (t)\pm\chi_1(t), \xi_2 (t)\pm\chi_2(t) \},
\end{equation}
where $\xi_{1,2}(t)$ and $\chi_{1,2}(t)$ were defined by \eqref{eq:XiChiDefinitions}. Then, $\mathcal{C}[P_t]\geqslant 0$ iff $\spec{\mathcal{C}[P_t]}\subset\reals_+$. Introducing variables $\alpha_j = e^{-\vartheta_j (t)}$ for $j\in\{1,\,2,\,3\}$, one can check by hand that non-negativity of $\spec{\mathcal{C}[P_t]}$ yields a system of four linear inequalities
\begin{equation}\label{eq:PtCPdecohIneq}
	\left\{
	\begin{aligned} 
		&\alpha_1\alpha_2+\alpha_1 \alpha_3-\alpha_2 \alpha_3 \leqslant 1,\\
		&\alpha_1\alpha_2-\alpha_1 \alpha_3+\alpha_2 \alpha_3 \leqslant 1, \\
		&-\alpha_1\alpha_2+\alpha_1 \alpha_3+\alpha_2 \alpha_3 \leqslant 1, \\
		&-\alpha_1\alpha_2-\alpha_1 \alpha_3-\alpha_2 \alpha_3 \leqslant 1.
	\end{aligned}
	\right.
\end{equation}
Solution of this system may be then divided into three unbounded regions,
\begin{subequations}
	\begin{equation}
		\mathcal{B}_1 = \{ \alpha_1,\alpha_2 \leqslant 1, \, \alpha_3\leqslant \tfrac{1+\alpha_1\alpha_2}{\alpha_1 + \alpha_2} \},
	\end{equation}
		\begin{equation}
		\mathcal{B}_2 = \{ \alpha_1 < 1, \, 1 < \alpha_2 < \alpha_{1}^{-1}, \, \alpha_3\leqslant \tfrac{-1+\alpha_1\alpha_2}{\alpha_1-\alpha_2} \},
	\end{equation}
	\begin{equation}
		\mathcal{B}_3 = \{ \alpha_1 > 1, \, \alpha_2 < \alpha_{1}^{-1}, \, \alpha_3\leqslant \tfrac{1-\alpha_1\alpha_2}{\alpha_1-\alpha_2} \}.
	\end{equation}
\end{subequations}
Put also $\mathcal{B} = \mathcal{B}_1\cup\mathcal{B}_2\cup\mathcal{B}_3$. By reverting the $\alpha_j$ substitution, regions $\mathcal{A}_j$ show up as preimages of $\mathcal{B}_j$ under a mapping $(x,y,z) \mapsto (e^{-x},e^{-y},e^{-z})$. A schematic plot of region $\mathcal{A}$ is also presented in figure \ref{fig:Aset}.

\begin{figure}[htbp]
	\centering
		\includegraphics[width=0.5\textwidth]{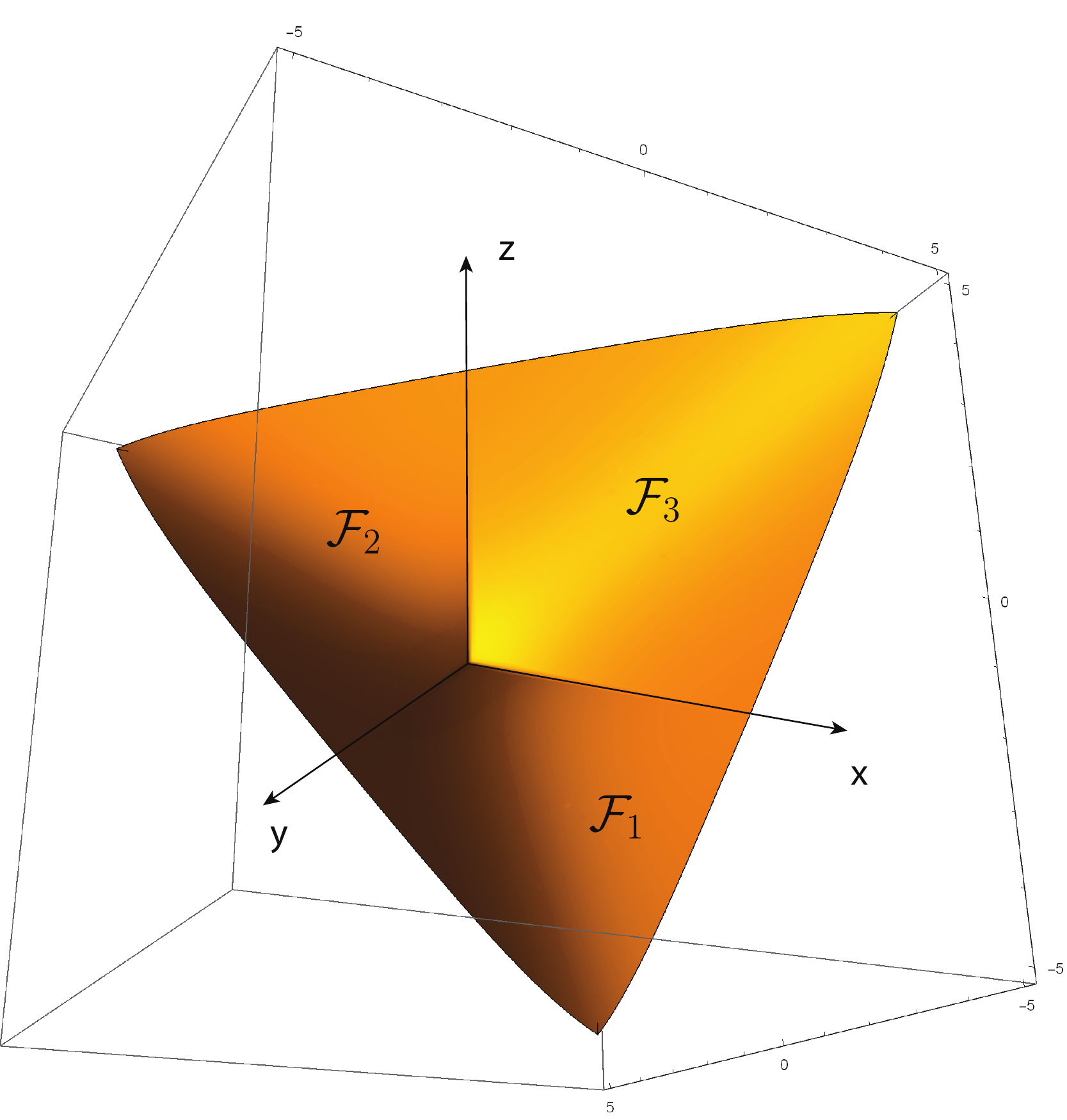}
	\caption{Schematic plot of region $\mathcal{A}$ for $x,y,z \in [-5,5]$. All bounding surfaces $\mathcal{F}_{1,2,3}$ are tangent to appropriate planes $x,y,z=0$ at the origin. The whole region is invariant w.r.t. rotations by angle $2\pi n / 3$, $n\in\integers$, around axis $x=y=z$.}
	\label{fig:Aset}
\end{figure}

Finally, claim \ref{point:PureDecThree} involves checking whether the \emph{propagator} of $P_t$, defined by simple expression $\mathcal{V}_{t,s} = P_t P_{s}^{-1}$, is completely positive. This is achieved by computing its matrix counterpart $\mathbf{P}(t)\mathbf{P}(s)^{-1}$ and transforming to $B(\matr{2})$. The result can be shown to be, due to diagonal structure of $\mathbf{P}(t)$, similar to \eqref{eq:MapsPureDecPt},
\begin{equation}
	\mathcal{V}_{t,s}(x) = \left( \begin{array}{cc} \xi_{1}(t,s)\, x_{11} + \xi_{2}(t,s) \, x_{22} & \chi_{1}(t,s)\, x_{12}-\chi_{2}(t,s)\, x_{21} \\ \chi_{1}(t,s)\, x_{21} - \chi_{2}(t,s) \, x_{12} & \xi_{2}(t,s)\, x_{11} + \xi_{1}(t,s)\, x_{22} \end{array}\right),
\end{equation}
with a new set of two-variable functions $\xi_{1,2}$, $\chi_{1,2}$ and $\delta_{j}$,
\begin{subequations}
	\begin{equation}
		\xi_{1,2}(t,s) = \frac{1}{2}\left(1\pm e^{-\delta_1 (t,s)-\delta_2 (t,s))}\right),
	\end{equation}
	\begin{equation}
		\chi_{1,2}(t,s) = \frac{1}{2}\left(e^{-\delta_{1}(t,s)-\delta_3 (t,s))}\pm e^{-\delta_{2}(t,s)-\delta_3 (t,s))}\right),
	\end{equation}
	\begin{equation}
		\delta_j (t,s) = \vartheta_j(t) - \vartheta_j (s), \quad j \in \{1,\,2,\,3\}.
	\end{equation}
\end{subequations}
By its similarity to \eqref{eq:MapsPureDecPt}, Choi matrix of $\mathcal{V}_{t,s}$ is of almost the same form as \eqref{eq:PtDecohChoi}, however with $\delta_j (t,s)$ in place of $\vartheta_j (t)$. Requiring non-negativity of its spectrum leads, by introducing variables $\alpha_j = e^{-\delta_j (t,s)}$, to exactly the same system of inequalities as \eqref{eq:PtCPdecohIneq}. Therefore, $\mathcal{C}[\mathcal{V}_{t,s}] \geqslant 0$ and $\mathcal{V}_{t,s}\in\cptp{\matr{2}}$ if and only if $(\alpha_j)\in\mathcal{B}$, or equivalently, if
\begin{equation}\label{eq:DeltaCond1}
	\vec{\delta}(t,s) = \vec{\vartheta}(t)-\vec{\vartheta}(s) = (\delta_1 (t,s), \delta_2 (t,s), \delta_3 (t,s)) \in \mathcal{A}.
\end{equation}
\noindent However, the requirement of \emph{divisibility} allows to greatly refine condition \eqref{eq:DeltaCond1}. First, notice that each $P_t \in \cptp{\matr{2}}$ is uniquely described by a \emph{vector} $\vec{\vartheta}(t)\in\mathcal{A}$ and function $t\mapsto P_t$ is represented by differentiable \emph{curve} $t \mapsto \vec{\vartheta}(t)$. Similarly, every map of a form $P_t P_{s}^{-1}$ is bijectively determined by a vector $\vec{\delta}(t,s)$, with $\vec{\delta}(t,t)$ corresponding to identity map for any $t\geqslant 0$. Geometrically, function $t\mapsto \vec{\delta}(t,s)$ for some constant $s$ is also a curve, created by \emph{translating} curve $\vec{\vartheta}$ by constant vector $-\vec{\vartheta}(s)$, such that point $\vec{\vartheta}(s)$ is mapped into point $\vec{0} = (0,0,0)$, the origin. With any such curve, one associates its \emph{velocity},
\begin{equation}
	\vec{v}(t) = \frac{d\vec{\vartheta}(t)}{dt} = \frac{d\vec{\delta}(t,s)}{dt},
\end{equation}
which is \emph{tangent} to it at point $\vec{\vartheta}(t)$. Suppose now $P_t$ is CP-divisible in some interval $\mathcal{I}\subset\reals_{+}$. Then, for arbitrarily chosen $t,t^\prime,s\in\mathcal{I}$ such that $t^\prime \in [s,t]$, propagator $\mathcal{V}_{t,s}$ is a composition of two subsequent propagators, $\mathcal{V}_{t,s} = \mathcal{V}_{t, t^\prime}\mathcal{V}_{t^\prime,s}$, both of them being again completely positive and divisible. As such, they are both uniquely described by some vectors $\vec{\delta}(t,t^\prime),\vec{\delta}(t^\prime, s) \in \mathcal{A}$. Divisibility condition is then equivalent to the addition rule
\begin{equation}
	\vec{\delta}(t,s) = \vec{\delta}(t,t^\prime)+\vec{\delta}(t^\prime,s), \quad t^\prime \in [s,t].
\end{equation}
Suppose that the curve $\vec{\vartheta}$ corresponding to $P_t$ is s.t. any component of its velocity, $v_j (t)$, is negative anywhere in $\reals_+$. Then, as $t\mapsto\vec{\vartheta}(t)$ is continuous, there exists an interval $[t_1, t_2]$ such that $v_j (t_1) = v_j (t_2) = 0$ and $v_j (t) < 0$ for all $t \in (t_1, t_2)$, i.e.~$\vec{v}(t)$ points in the direction \emph{outside} of set $\mathcal{A}$ within $(t_1,t_2)$. Take any fixed $s \in (t_1, t_2)$; necessarily, $v_j (s) < 0$. Then, a curve $\vec{\delta}(\cdot,s)$, starting at $\vec{0}$ is a geometrical representation of $\mathcal{V}_{t,s}$ for $t \geqslant s$, as mentioned earlier. However, the velocity vector at the origin $\vec{v}(s)\notin\reals_{+}^{3}$ and so the curve $\vec{\delta}(\cdot,s)$ is initially directed outside of $\reals_{+}^{3}$, i.e.~there surely exists some $t^\prime > s$ small enough such that $\vec{\delta}(t^\prime,s) \notin \reals_{+}^{3}$. Moreover, it can be also shown that even $\vec{\delta}(t^\prime, s)\notin \mathcal{A}$; to achieve this, consider one of the boundary surfaces of region $\mathcal{A}$ along one of the axes. Since $\mathcal{A}$ is invariant with respect to rotations by angle $2n\pi /3$, $n\in\integers$ around axis $x=y=z$, without loss of generality we can take the surface $\mathcal{F}_1$, the lowest boundary of sub-region $\mathcal{A}_1$ \eqref{eq:regionA1}. Definition of $\mathcal{A}$ yields that $\mathcal{F}_1$ can be represented as a \emph{function} $\mathcal{F}_1 (x,y)$ given by formula
\begin{equation}
	\mathcal{F}_1 (x,y) = \ln{\frac{\cosh{\frac{1}{2}(x-y)}}{\cosh{\frac{1}{2}(x+y)}}}, \quad \mathcal{F}_1 : \reals_{+}^{2} \to [0,-\infty).
\end{equation}
It is easy to notice $\lim_{(x,y)\to (0,0)}\mathcal{F}_1 (x,y) = 0$, with both $x$, $y$ tending to zero from above. Let us consider any plane $\mathcal{P}_{\vec{n}}$ containing the $z$ axis, spanned by vector $(0,0,1)$ and any vector $\vec{n} = (n_x,n_y,0)$, $n_x, n_y \geqslant 0$, laying in plane $z=0$ (see fig. \ref{fig:Curve}). 

\begin{figure}[htbp]
	\centering
		\includegraphics[width=0.6\textwidth]{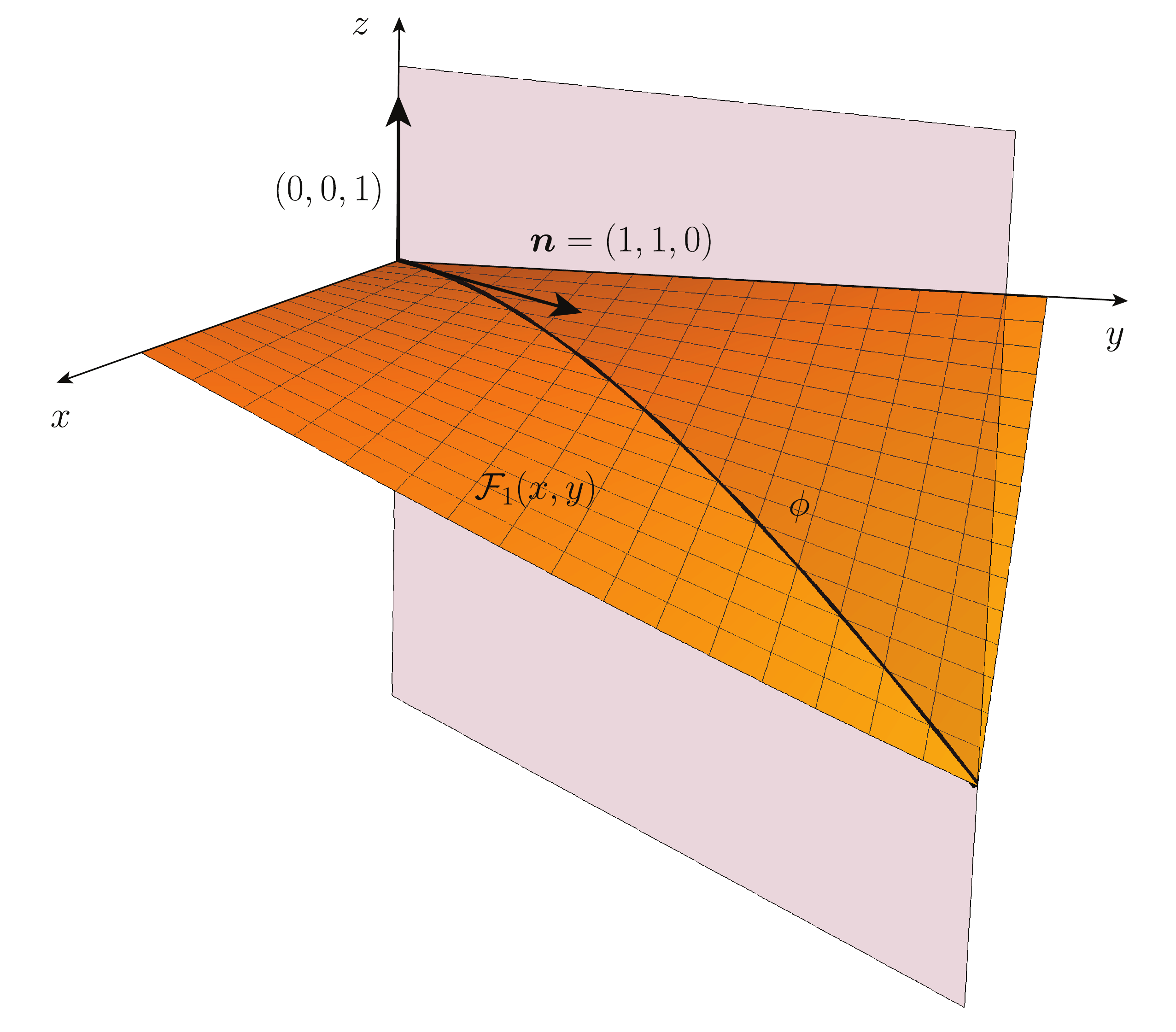}
	\caption{Plot of boundary surface $\mathcal{F}_1 (x,y)$ in close proximity of the origin and exemplary plane $\mathcal{P}_{\vec{n}}$ for $\vec{n} = (1,1,0)$. Intersection of $\mathcal{F}_1 (x,y)$ and $\mathcal{P}_{\vec{n}}$ defines a convex curve $\vec{\phi}$, the velocity of which is simply $\vec{n}$ at the origin, i.e.~is tangent both to the surface and to plane $z=0$ (regardless of chosen $\mathcal{P}_{\vec{n}}$).}
	\label{fig:Curve}
\end{figure}

Intersection of $\mathcal{P}_{\vec{n}}$ and $\mathcal{F}_1$ defines a \emph{convex curve} $\vec{\phi} = (\phi_1, \phi_2, \phi_3)$ which may be given in parametric form as
\begin{equation}
	\phi_1 (\xi) = n_x \xi, \quad \phi_2 (\xi) = n_y \xi, \quad \phi_3 (\xi) = \mathcal{F}_1 (x(\xi), y(\xi)) = \ln{\frac{\cosh{\frac{\xi}{2}(n_x-n_y)}}{\cosh{\frac{\xi}{2}(n_x+n_y)}}}.
\end{equation}
One can check, that the velocity vector $\frac{d\vec{\phi}(\xi)}{d\xi}$ of curve $\vec{\phi}$ simply evaluates to $\vec{n}$ for $\xi=0$. In consequence, all vectors tangent to $\mathcal{F}_1$ at $\vec{0}$ are also tangent to the plane $z=0$. Likewise, all vectors tangent to surfaces $\mathcal{F}_2$ and $\mathcal{F}_3$ at $\vec{0}$ are also tangent to planes $x=0$ and $y=0$, respectively. Therefore, if velocity $\vec{v}(s)$ of curve $\vec{\delta}(\cdot,s)$ at $\vec{0}$ has negative $j$-th component, then point $t^\prime > s$ can be chosen in such way that segment of curve $\vec{\delta}(t,s)$ for $t \in [s,t^\prime]$ is not enclosed by surface $\mathcal{F}_j$, and in the result, not in $\mathcal{A}$. In consequence, $\vec{\delta}(t^\prime,s)\notin\mathcal{A}$ and so $\mathcal{V}_{t^\prime,s}\notin\cptp{\matr{2}}$. We have therefore found a division $\mathcal{V}_{t,s} = \mathcal{V}_{t,t^\prime}\mathcal{V}_{t^\prime,s}$ such that at least one of the propagators at the r.h.s. fails to be completely positive; therefore, $P_t$ cannot be CP-divisible. From this we imply that a curve $\vec{\vartheta}$ can represent a CP-divisible map iff $\vec{v}(t)\in\reals_{+}^{3}$ for all $t\in\mathcal{I}$, i.e.~if condition
\begin{equation}
	\frac{d\vartheta_j (t)}{dt} = \gamma_{j}(t) - \frac{\Gamma_{j}(T)}{T} \geqslant 0,
\end{equation}
holds for all $t\in\mathcal{I}$ and $j \in \{1,\, 2,\, 3\}$. This concludes the proof.
\end{proof}

\subsection{Frobenius orthonormal basis of algebra \texorpdfstring{$\matr{3}$}{M3}}
\label{sec:M3basis}

The following matrices $F_i$, $i\in\{1,\,... \, , \, 9\}$, were used as a basis of $\matr{3}$ while conducting numerical analysis outlined in section \ref{sec:Counterexample}. It is straightforward to check that $\tr{F_{i}^{\hadj} F_j} = \delta_{ij}$, i.e.~the basis is Frobenius orthonormal; one often finds such matrices in literature as generators of $\mathrm{SU}(N)$ or so-called \emph{Gell-Mann matrices} (up to normalizing factors; see \cite{IngemarBengtsson2017}).

\small
\begin{align}
	&F_1 = \frac{1}{\sqrt{2}}\left(\begin{array}{ccc} 0 & 1 & 0 \\ 1 & 0 & 0 \\ 0 & 0 & 0 \end{array}\right), \quad F_2 = \frac{1}{\sqrt{2}}\left(\begin{array}{ccc} 0 & 0 & 1 \\ 0 & 0 & 0 \\ 1 & 0 & 0 \end{array}\right), \quad F_3 = \frac{1}{\sqrt{2}}\left(\begin{array}{ccc} 0 & 0 & 0 \\ 0 & 0 & 1 \\ 0 & 1 & 0 \end{array}\right), \nonumber \\
	&F_4 = \frac{i}{\sqrt{2}}\left(\begin{array}{ccc} 0 & -1 & 0 \\ 1 & 0 & 0 \\ 0 & 0 & 0 \end{array}\right), \quad F_5 = \frac{i}{\sqrt{2}}\left(\begin{array}{ccc} 0 & 0 & -1 \\ 0 & 0 & 0 \\ 1 & 0 & 0 \end{array}\right), \quad F_6 = \frac{i}{\sqrt{2}}\left(\begin{array}{ccc} 0 & 0 & 0 \\ 0 & 0 & -1 \\ 0 & 1 & 0 \end{array}\right), \nonumber \\
	&F_7 = \frac{1}{\sqrt{2}}\left(\begin{array}{ccc} 1 & 0 & 0 \\ 0 & -1 & 0 \\ 0 & 0 & 0 \end{array}\right), \quad F_8 = \frac{1}{\sqrt{6}}\left(\begin{array}{ccc} 1 & 0 & 0 \\ 0 & 1 & 0 \\ 0 & 0 & -2 \end{array}\right), \quad F_9 = \frac{1}{\sqrt{3}}\left(\begin{array}{ccc} 1 & 0 & 0 \\ 0 & 1 & 0 \\ 0 & 0 & 1 \end{array}\right). \nonumber
\end{align}
\normalsize


\end{document}